\newtheorem{prop}{Proposition}[section]
\newtheorem{de}{Definition}[section]
\newtheorem{cor}{Corollary}[section]
\newtheorem{thm}{Theorem}[section]
\newtheorem{re}{Remark}[section]
\newtheorem{ass}{Assumption}[section]
\newtheorem*{lemmaI}{Lemma (Imhof \cite{I05})}
\newtheorem*{acknowledgement*}{Acknowledgement}
\newcommand{\vast}{\bBigg@{4}}
\newcommand{\Vast}{\bBigg@{5}}
\providecommand{\diag}{ \textbf{diag} }
\begin{document}
\title{Long-Run Analysis of the Stochastic Replicator Dynamics in the Presence of Random Jumps}

\author{Andrew Vlasic \\ Department of Mathematics and Statistics \\ Queen's University }
\date{}
\maketitle

\begin{abstract}
A further generalization of the stochastic replicator dynamic derived by Fudenberg and Harris \cite{FH92} is considered. In particular, a Poissonian integral is introduced to the fitness to simulate the affects of anomalous events. For the two strategy population, an estimation of the long run behavior of the dynamic is derived. For the population with many strategies, conditions for stability to pure strict Nash equilibria, extinction of dominated pure strategies, and recurrence in a neighborhood of an internal evolutionary stable strategy are derived. This extends the results given by Imhof \cite{I05}. 
\end{abstract}

\smallskip
Asymptotic stochastic stability; evolutionarily stable strategy; invariant measure; Lyapunov function; Nash equilibrium; recurrence; jump stochastic differential equation

\smallskip
92D15; 60H10; 60J40; 92D25

\section{Introduction}
Consider a two-player symmetric game, where $a_{ij}$ is the payoff to a player using strategy $S_i$ against an opponent employing strategy $S_j$, and define $A=(a_{ij})$, as the payoff matrix. We define $\Delta_{n}=\Big\{ \mathbf{y} \in\mathbb{R}^n : y_i>0 \ \mbox{for all} \ i \ \mbox{and} \ \sum y_i=1 \Big\}$ as the $n$-dimensional simplex and $\overline{ \Delta}_n$ as it's closure. Within a population we assume that every individual is programmed to play a pure strategy $S_i$. Let $r_i(t)$ be the size of the subpopulation that plays strategy $S_i$ at time $t$, which we denote as the $i^{th}$ subpopulation. Furthermore, define $\mathbf{r}(t):= (r_1(t), \ldots , r_n(t) )^T $, $\displaystyle R(t) : = \sum_{i} r_i(t)$ and $\mathbf{s}(t) := (s_1(t), \ldots, s_n(t))^T$ where $s_i(t) := r_i(t)/R(t)$ \Big(the $i^{th}$ subpopulation  frequency of the population\Big). When an agent in the $i^{th}$ subpopulation is randomly matched with another player from the entire population, $\big(A\mathbf{s}(t) \big)_i$ is the average payoff for this individual, and we take this to be the fitness of the player. We assume growth is proportional to fitness: 
\begin{equation*}
\dot{r}_i(t) = r_i(t) \big( A \mathbf{s}(t) \big)_i,
\end{equation*}  
and hence
\begin{equation*}
\dot{s}_i(t) = s_i(t) \bigg( \big(A \mathbf{s}(t) \big)_i - \mathbf{s}(t)^T A \mathbf{s}(t) \bigg).
\end{equation*}  
This is the replicator dynamic. 

Foster and Young \cite{FY91} appear to be the first to use a stochastic differential equation to describe replicator dynamic, which they do by injecting a Brownian term directly into the replicator equation. Considering a biological perspective, Fudenberg and Harris \cite{FH92} derived a continuous time stochastic replicator dynamic by first assuming
\begin{equation*}
dr_i(t) = r_i(t) \bigg( \big(A \mathbf{s}(t) \big)_idt + \sigma_idW_i(t) \bigg),
\end{equation*}
for $\sigma_i \in \mathbb{R}_+ $ and the $W_i(t)$ are pairwise independent standard Wiener processes. It\^o's lemma then yields 
\begin{equation}
ds_i(t) = \sum_{j \neq i} s_i(t) s_j (t) \bigg[ \Big( \big(A \mathbf{s}(t) \big)_i - \big(A \mathbf{s}(t) \big)_j \Big)dt   + \Big(  \sigma_j^2  s_j (t) -  \sigma_i^2 s_i(t) \Big)dt + \Big( \sigma_i dW_i (t)  -  \sigma_j dW_j(t) \Big)  \bigg] .
\end{equation}
This is known as the \textbf{stochastic replicator dynamic}. The idea behind this model is that randomness comes from the aggregate shock, or population level interactions, that affects the fitness of each type. The only stationary points for this dynamic are the vertices of the simplex. 

The authors then take assume a two strategy population, which equates $s_2(t) = 1- s_1(t)$, and obtain the more manageable model
\begin{equation}\begin{split}
ds_1(t) &  = s_1(t) \Big(1- s_1(t) \Big) \Big[  a_{12} - a_{22} + \sigma_2^2 + \Big\{  a_{11} -  a_{21} -\sigma_1^2 +  a_{22} - a_{12} -\sigma_2^2 \Big\} s_1(t) \Big] dt 
\\ & +   \sigma s_1(t) \Big(1- s_1(t) \Big) dW(t).
\end{split}\end{equation}
This dynamic is now a one-dimensional process in which there are many methods and theorems to utilize in order to determine the stability of the process, for most games. 

Cabrales \cite{C00} expanded upon the stochastic replicator dynamic by adding deterministic mutations from both an economic and biological perspective. Cabrales then showed that strictly dominated strategies, under certain mutation rates, become extinct. Moreover, for the two strategy case and a different fitness than the linear fitness introduced earlier, the author showed that the invariant measure is weighted at a different point than the internal evolutionary stable strategy of the deterministic replicator dynamic. 

 Imhof \cite{I05} considered the Fudenberg and Harris' model with an arbitrary finite number of subpopulations and determined conditions for recurrence in a neighborhood of an internal evolutionary stable strategy, stability of pure strategies that are strict Nash equilibria, and the extinction of dominated pure strategies. 

Considering the Stratonovich integral instead of It\^o's integral in the stochastic replicator dynamic, Khasminskii and Potsepun \cite{KP06} analyzed this dynamic. Interestingly, the authors determined for the two strategy case that this type of noise has no affect on the dynamics.  
 
Working with $n$ strategies and a more general model than Imhof \cite{I05}, Bena\"{\i}m, Hofbauer, and Sandholm \cite{BHS07} give conditions for permanence and impermanence of the stochastic replicator dynamic, i.e, where a system is permanent if the boundary of the state space is a repeller and, impermanent is when the system converges to the boundary of the state space with probability one.

The stochastic replicator dynamic simulates everyday noise very well, however, it only assumes randomness among interactions subject to aggregate shocks, such as the ``weather''. One must account for the affects of random events that come about suddenly and make an immediate impact. There are many examples of these events, which include earthquakes, tsunamis, volcanic explosions, floods, over hunting, an increase in the level of toxicity in the environment, etc. We call these type of events \textbf{anomalies}. These are all one-time stochastic events that have an immediate affect on the fitnesses of the subpopulations and are randomly reoccurring. For a bacterial populace, antibiotics are considered a rare event \cite{BMC04}. Over fishing may also be considered as an anomaly \cite{BH57}. However, not all anomalies have catastrophic impacts \cite{BF84}.  The result of many of these anomalies is bottlenecking, or gene deletion.  Well known examples of bottlenecking through catastrophic events are the northern elephant seal and the cheetah \cite{MO93}. Furthermore, there are examples of a sudden increase in population, such as migration or an increase in nutrients. Runoff from farmland enters the Gulf of Mexico via the Mississippi Delta, and this sudden increase in nutrients creates an  increase algae. This algal population growth is so tremendous it depletes oxygen and creates dead zones and alters the food-chain, which fittingly may be seen as an instant negative effect to the other sea life \cite{RTW02}. 

There are a few authors that have modeled anomalies \cite{BBCP89, HT97}, however, they have considered simple models analyzing the possibility of extinction. Hanson and Tuckwell \cite{HT97} consider a Poisson integral to capture the impacts of the affects, and analyze the possibility of extinction. In this paper, we consider effects that impact each subpopulation's fitness, and where the affects slowly decrease. To capture this phenomenon, a compensated Poisson integral is added to the Fudenberg and Harris model, so that the expectation of this perturbation is zero, and a continuous integrand models the various intensities of each impact of the anomaly. Since the process is no longer continuous, but right-continuous, many of the methods used to determine the behavior of the process no longer may be applied. The methods we are able to apply require meticulous and tedious calculations, which are not necessary if the dynamic is almost surely continuous. This displays the complexity that needs to be considered when modeling the evolution of a populace. Similar to the stochastic replicator dynamic, the only stationary points are the vertices of the simplex. 

In an attempt to characterize the two strategy case, an estimation of the long-run behavior is derived. The method follows the proof of the theorem that Fudenberg and Harris applied in their analysis \cite{FH92}. The proof of this theorem first considers a subinterval of $[0,1]$ where the initial condition lies, and a second-order differential equation determines the probabilities of the process first leaving from the right or left endpoint. The subinterval is then extended to $[0,1]$. Unlike the theorem applied, we have an integral-differential equation that determines the probabilities of the process first leaving through the left or right endpoint of the subinterval \cite{HT76, MA00}. Since the general form is very difficult to solve, an estimation to the solution of the integral-differential equation is determined. This solution is then used to estimate the long-run behavior of the process.

For the case where there are more than two subpopulations, we follow the work of Imhof \cite{I05}. Imhof derived conditions for the recurrence in a neighborhood of an an internal evolutionary stable strategy, stability of pure strict Nash Equilibria, and the extinction of a dominated strategy. The author's assumptions for the stochastic replicator dynamic are assumed and sufficient assumptions for the integrands of the compensated Poisson integral are given. The conditions for the stability of pure Nash equilibria and extinction of pure dominated strategies are more general and thus creates situations for either behavior to hold that would otherwise not be possible with just considering the stochastic replicator dynamic. However, the condition for recurrence in a neighborhood of an internal evolutionary stable strategy are more strenuous, and hence is more difficult to attain.

Of the previously mentioned models, Cabrales \cite{C00} is similar, but not very close to the one developed in this paper. Cabrales adjusts for deterministic mutations and the model analyzed in this paper adjusts for stochastic impacts on fitness. Moreover, many of Cabrales's results are given from the natural premise of letting the mutations rates go to zero, while the results in this paper assume a constant presence and how close these integrands are to each other dictate the conditions for various stabilities.  Lastly, there appears to be a natural extension of the model developed in this paper and the results of Bena\"{\i}m, Hofbauer, and Sandholm \cite{BHS07} for the conditions of permanence and impermanence. However, this is not discussed.

\section{Deriving the Extended Stochastic Replicator Dynamic}
In the event of an anomaly, the affect to each $i$-subpopulation has a value of which me monetarily call $h_i$. However, this value is not always the same. To account for this, we take $h_i(x)$ as a function that determines the impact of the anomaly to the $i^{th}$-subpopulation when the impact has, say, ``strength'' $x \in \mathbb{R}$, i.e., how much the anomaly affects the populace as a whole. We call $h_i(x)$ the \textbf{jump function} of the $i^{th}$ subpopulation. We assume this anomaly happens with a Poisson distribution, say $N$, with intensity measure $\nu( \cdot)$. Hence, the intensity is $\nu\big(\mathbb{R} \big)$, where we assume $\nu\big(\mathbb{R} \big)<\infty$. (If $\nu\big(\mathbb{R} \big)=\infty$ then it is possible for the anomaly to happen an infinite number of time in a finite time interval.) For any interval of time, the total impact to the $i^{th}$-subpopulation is $\displaystyle \int_{0}^{t} \int_{\mathbb{R} } h_i(x)N(ds,dx)$. To make sense of this integral, for any $B \in \mathcal{B}\Big( \mathbb{R}^d \backslash \{0\} \Big)$, (the Borel $\sigma$-algebra) $N(t,B)$ is Poisson process with intensity $\nu(B)$, where the anomaly has ``strength'' $x\in B$. The integral accounts for all possibilities on how the anomaly could affect the dynamic. Since we assumed the affect of the anomaly slow dissipates, we adjust the integral by adding the deterministic $\displaystyle - \int_{0}^{t} \int_{\mathbb{R} } h_i(x)\nu(dx)ds$, ($ds$ is the Lebesgue measure), so that, for $\displaystyle \tilde{N}(ds,dx):=N(ds,dx)-\nu(dx)ds$, (the expectation) $\displaystyle E \bigg[ \int_{0}^{t} \int_{\mathbb{R} } h_i(x)\tilde{N}(ds,dx) \bigg]=0.$ The net affect  is $\displaystyle \int_{0}^{t} r_i(s-) \int_{\mathbb{R} } h_i(x) \tilde{N}(ds,dx)$, where $r_i(s-)$ is the left limit, and the expectation is zero.

\bigskip
Therefore our growth model is
\begin{equation}
dr_i(t) = r_i(t-) \left(  \big(A \mathbf{s}(t-) \big)_i dt +\sigma_i dW_i(t) + \int_{ \mathbb{R} }h_i(x)\tilde{N}(dt,dx) \right).
\end{equation}

\begin{re}\label{re1}
If we assume that $\displaystyle \inf_{x\in \mathbb{R} } \Big\{ h_i(x) \Big\} > -1$ then $r_{i}(t)$ can be written explicitly as $\exp\big( Y_i(t) \big)$ where $\displaystyle dY_i(t) =\left( \big(A \mathbf{s}(t-) \big)_i - \frac{\sigma_i^2}{2} \right)dt +\sigma_i dW_i(t) + \int_{ \mathbb{R} }\log \Big[ 1+h_i(x) \Big]\tilde{N}(dt,dx)+ \int_{  \mathbb{R} } \bigg( \log \Big[ 1+h_i(x) \Big] - h_i(x) \bigg)\nu(dx)dt$, which follows from It\^o's lemma. To ensure existence and uniqueness of the sample paths for this process, assumptions for the jump functions are given below.
\end{re}

\bigskip
\begin{ass}
We assume that $\nu(\cdot)$ is Borel and $\nu\big(\mathbb{R} \big)<\infty$. Moreover, for all $i$:
\begin{enumerate}[a.]
\item $h_i(x)$ is bounded; 
\item $\displaystyle \inf_{ x\in \mathbb{R} } \Big\{ h_i(x) \Big\} > -1$; 
\item $h_i(x)$ is continuously differentiable.
\end{enumerate}
\end{ass}

\bigskip
Although the assumption about the infimum for each jump functions was presented in the context of technical reasoning, (guaranteeing that we have exponential growth), it translates very well in population dynamics: if an anomaly has too large of a detrimental impact on the populace then there is an immense shift to the entire dynamics, and the current model would inadequately describe the interactions. Furthermore, the assumptions on the measure $\nu(\cdot)$ and the jump functions guarantees that the stochastic differential equation further generated by this perturbation is unique (see See Sato \cite{S99}, Bertoin \cite{B96}, or Applebaum \cite{DA04} for further information).

\smallskip
In order to develop intuition about the evolution of the population, we take a two subpopulation model for the rest of this section and in section three. In the latter sections, an arbitrary finite number of subpopulations is taken into consideration, and conditions for stability near a pure strict Nash equilibria are derived, as well as recurrence in a neighborhood of an internal evolutionary stable strategy, and extinction of pure dominated strategies.  

\smallskip
Applying It\^o's lemma (\cite{GS72} Theorem 2 Chapter 2 $\S$6, \cite{DA04}) to $s_1(t)$ yields
\begin{equation}\begin{split}
ds_1(t) & = \Bigg[ s_1(t-) s_2(t-) \bigg(  \big( As(t-) \big)_1 - \big( As(t-) \big)_2  +  s_2(t-) \sigma_2^2  -s_1(t-) \sigma_1^2 \bigg) \\ 
& \ \ \ \ \ \ + \int_{\mathbb{R} } \left( \frac{s_1(t-) + s_1(t-)h_1(x) }{ \Big( s_1(t-)+ s_1(t-)h_1(x) \Big)  +  \Big( s_2(t-)+ s_2(t-)h_2(x) \Big)} \right. - \frac{ s_1(t-) }{ s_1(t-) + s_2(t-) } \\
& \ \ \ \ \ \  -  \Big[ s_2(t-)h_1 (x) s_1(t-) - s_2(t-)h_2 (x) s_1(t-) \Big] \Bigg)  \nu(dx) \Bigg]dt \\
& + s_1(t-) s_2(t-)\Big( \sigma_1  dW_1(t)   -  \sigma_2 dW_2(t) \Big)\\
& + \int_{ \mathbb{R} } \left( \frac{s_1(t-) + s_1(t-)h_1(x) }{ \Big( s_1(t-)+s_1(t-)h_1(x) \Big)  + \Big( s_2(t-)+s_2(t-)h_2(x) \Big)}  - \frac{s_1(t-)}{s_1(t-) +s_2(t-)} \right) \tilde{N}(dt,dx).
\end{split}\end{equation}
Solving for $ds_2(t)$ gives us a similar equality. This particular version of It\^o's lemma can be found in Applebaum (Theorem 4.4.7 \cite{DA04}) or Gihman and Skorohod (\cite{GS72}  Part II Chapter 2 $\S$6). Similar to the stochastic replicator dynamic, the only stationary points for this dynamic are the vertices of the simplex, which is the case when the population consists of only one type.

\bigskip
\begin{prop}\label{alltime}
For all finite $t \geq 0$ and $\mathbf{y} \in \Delta_n$, we have $P_{ \mathbf{y} } \Big( \mathbf{s}(t) \in \Delta_n \Big)=1$.
\end{prop}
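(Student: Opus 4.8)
The plan is to reduce the statement to the strict positivity of the subpopulation sizes and then read off that positivity from the exponential representation of Remark~\ref{re1}. Observe first that membership in $\Delta_n$ amounts to two conditions: $\sum_i s_i(t) = 1$ and $s_i(t) > 0$ for every $i$. The normalization $\sum_i s_i(t) = \sum_i r_i(t)/R(t) = 1$ holds identically from the definition $s_i = r_i/R$ with $R = \sum_j r_j$, so it suffices to show that, under $P_{\mathbf{y}}$ with $\mathbf{y} \in \Delta_n$, one has $r_i(t) > 0$ for every $i$ and every finite $t$ almost surely. Dividing by $R(t) = \sum_j r_j(t) \in (0,\infty)$ then yields $s_i(t) \in (0,1)$ and hence $\mathbf{s}(t) \in \Delta_n$. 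Here we may realize the frequency $\mathbf{y}$ by any positive size vector, e.g.\ $\mathbf{r}(0) = \mathbf{y}$, so that $r_i(0) = y_i > 0$.

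For the positivity I would invoke Remark~\ref{re1}. Since the standing hypothesis $\inf_{x} h_i(x) > -1$ is in force, each coordinate admits the representation $r_i(t) = \exp\big(Y_i(t)\big)$, where $Y_i$ solves the Itô--Lévy equation displayed there. Because the exponential is strictly positive, it is enough to check that $Y_i(t)$ is almost surely finite on any finite horizon. The initial value $Y_i(0) = \log r_i(0)$ is finite as $r_i(0) = y_i > 0$, and the drift together with the Brownian term $\sigma_i W_i(t)$ are finite for finite $t$ by continuity of the fitness $\big(A\mathbf{s}(s-)\big)_i$ and the standard path properties of $W_i$.

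The step carrying the real content is controlling the two jump terms. Boundedness of $h_i$ together with $\inf_x h_i(x) > -1$ keeps $1 + h_i(x)$ bounded away from $0$ and above on $\mathbb{R}$, so $\log[1 + h_i(x)]$ is a bounded, measurable function of $x$; combined with $\nu(\mathbb{R}) < \infty$, this makes the compensator $\int_{\mathbb{R}}\big(\log[1+h_i(x)] - h_i(x)\big)\nu(dx)$ finite and the compensated integral $\int_0^t\int_{\mathbb{R}} \log[1+h_i(x)]\,\tilde N(ds,dx)$ almost surely finite on $[0,t]$. Indeed, finiteness of the intensity measure forces only finitely many jumps of $N$ in $[0,t]$ almost surely, so the Poissonian contribution reduces to a finite sum of uniformly bounded increments. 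Hence $Y_i(t)$ is almost surely finite, giving $r_i(t) = \exp\big(Y_i(t)\big) > 0$ for every $i$ and completing the reduction.

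I expect the main obstacle to be the rigorous justification of the exponential representation and the finiteness of these jump integrals: precisely, confirming that the imposed conditions are exactly what prevent $Y_i$ from diverging to $-\infty$ in finite time, which would correspond to extinction of a subpopulation and escape from $\Delta_n$. The condition $\inf_x h_i(x) > -1$ is the crucial one, as it guarantees $\log[1+h_i(x)]$ is well defined and keeps the integrand integrable against $\nu$. Once those estimates are in place, the conclusion $\mathbf{s}(t) \in \Delta_n$ follows immediately and uniformly over finite $t$.
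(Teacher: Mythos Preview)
Your proposal is correct and follows exactly the route the paper takes: the paper's own argument is a one-line appeal to Remark~\ref{re1}, observing that the exponential representation $r_i(t)=\exp\big(Y_i(t)\big)$ forces $r_i(t)>0$ almost surely, whence $\mathbf{s}(t)\in\Delta_n$. You have simply unpacked that appeal, supplying the finiteness checks on $Y_i$ that the paper leaves implicit.
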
 

\bigskip
This proposition may be readily seen by considering Remark \ref{re1}, which tells us that $r_i(t)$ is almost surely positive for all time, which implies $s_i(t)$ is in the simplex for any finite time $t$.

\smallskip
By the proposition we have the equality $s_2(t)=1-s_1(t)$. Hence, we may just focus on the dynamics of
\begin{equation}\begin{split} \label{onedim}
ds_1(t) & = s_1(t-) \bigg(1-s_1(t-) \bigg)  \bigg[ a_{12} - a_{22} + \sigma_2^2 +  \int_{ \mathbb{R} } \left( \frac{ h_1(x)-h_2(x)  }{  s_1(t-) \big[h_1(x) - h_2(x) \big] + 1+ h_2(x)  } +  h_2(x)-h_1(x)  \right) \nu(dx)         
\\   & +  \bigg( a_{11} -a_{21}  -\sigma_1^2 + a_{22} -a_{12}  -\sigma_2^2 \bigg)s_1(t-) \bigg] dt 
\\ & + \sigma s_1(t-) \bigg(1-s_1(t-) \bigg)  dW(t)
\\ & +  \int_{ \mathbb{R} } \frac{ s_1(t-) \big(1-s_1(t-) \big)  \big[ h_1(x)- h_2(x) \big]  }{ s_1(t-) \big[h_1(x) - h_2(x) \big] + 1+ h_2(x) } \tilde{N}(dt,dx),
\end{split}\end{equation}
where $\sigma := \sqrt{ \sigma^2_1 + \sigma^2_2}$ and $\displaystyle W(t) := \frac{ \sigma_1  W_1(t) -\sigma_2 W_2(t)}{ \sigma}$.   

If $h_1(x)=h_2(x)$ for every $x \in  \mathbb{R} $ then the jump terms disappear. Recall that for 
$
\tilde{A} :=
\left(
\begin{array}{cc}
 a_{11} + c  &    a_{12}    \\
 a_{21} + c   &  a_{22}        
\end{array}
\right)
$
or
$
\tilde{A} :=
\left(
\begin{array}{cc}
 a_{11}  &    a_{12} + c    \\
 a_{21}   &  a_{22}  + c       
\end{array}
\right),
$
we have the equality $\left( \tilde{A} p \right)_i - p \cdot \tilde{A} p = \left( A p \right)_i - p \cdot A p$ for  $i=1,2$  \cite{HS98}. Thus, how close the jump functions for each subpopulation are to each other, and not how large the size of the jumps are, would determine how much the system is affected by the Poissonian term. In application, if an anomaly were to affect the environment in such a way that the subpopulations are equally impacted, an equal change in each respective fitness is to be expected.

\section{An Estimation of the Long Run Behavior for the $2 \times 2$ Case}
In this section we estimate the long run behavior for a two strategy population, considering a certain class of jump functions. Since $s_2(t)=1-s_1(t)$, the analysis is simplified by only considering the dynamics of $s_1(t)$ with respect to Equation \eqref{onedim}. Notice for
$$
\tilde{\alpha}(y)=: y (1-y) \left[  \left( a_{12}  -a_{22} +\sigma_2^2 + \int_{ \mathbb{R} }\Big( h_2(x) -h_1(x) \Big)\nu(dx) \right) +\Big(  \big(a_{11}  - a_{21} - \sigma_1^2 \big) +\big(a_{22}  - a_{12} -\sigma_2^2 \big)  \Big) y   \right],
$$

$$
\tilde{\beta}(y) := \frac{\sigma^2}{2} y^2 \big(1-y\big)^2,
$$
and
$$
 y+ \gamma(y,x) := y +  \frac{ y \big(1- y \big)  \big[ h_1(x)- h_2(x) \big]  }{ y \big[h_1(x) - h_2(x) \big] + 1+ h_2(x) } = \frac{ y\big[ 1+h_1(x)\big] }{ y \big[ h_1(x) -h_2(x) \big] + 1+ h_2(x)},
$$
and taking $L$ as the infinitesimal generator of $s_1(t)$, we have
$$
Lf( \cdot ) = \tilde{\alpha}( \cdot ) f'( \cdot ) + \tilde{\beta}( \cdot ) f''( \cdot ) + \int_{ \mathbb{R} } \bigg( f \Big( \cdot +\gamma(\cdot,x) \Big) - f( \cdot ) \bigg)\nu(dx), 
$$
(Theorem 2 Part II Chapter 2 $\S$9 \cite{GS72}). Now, for $0 < y_1< y_2 <1$, define $\displaystyle \tau_{y_1y_2}(y_0)=\inf_{t \geq 0}\Big\{ s_1(t) \not\in(y_1,y_2) \Big| s_1(0)=y_0 \Big\}$,  $\pi_{y_2;y_1}(y_0)=P\Big( s_1\big( \tau_{y_1y_2}(y_0) \big) \geq y_2 \Big)$, and $\pi_{y_1;y_2}(y_0)=P\Big( s_1\big( \tau_{y_1y_2}(y_0) \big) \leq y_1 \Big)$. Considering an integro-differential equation of the form
\begin{equation} \label{probgen}
Lu\big(y\big)= \tilde{\alpha}(y) u'(y) + \tilde{\beta}(y) u''(y) + \int_{ \mathbb{R} } \Big[  u \Big( y+ \gamma(y,x) \Big) - u(y) \Big] \nu(dx)=0 \  \ \mbox{for} \ y\in(y_1,y_2),
\end{equation}
with the conditions $u(y)=0$ for $y\in[0,y_1]$, and $u(y)=1$ for $y\in[y_2,1]$, the papers of Henry Tuckwell \cite{HT76} and Mario Abundo \cite{MA00} tell us that solving this integro-differential equation will give us $\pi_{y_2;y_1}(y_0)$, \Big(interchanging the initial conditions will give $\pi_{y_1;y_2}(y_0)$\Big). However, in order to apply these theorems we need to verify the condition $E\big[\tau_{y_1y_2}^n(y_0)\big]<\infty$ for every $n\in\mathbb{Z}_+$. This property is shown in Theorem 4.1 and Remark 4.1.

\smallskip
We should note here that the result in Tuckwell \cite{HT76} is for a jump-diffusion with a Poisson measure and not the compensated Poisson measure. However, Tuckwell's proof is based on a result in Gihman and Skorohod \cite{GS72}, in which the authors give an equality for the transition probability for a jump-diffusion with a compensated Poisson measure \Big(\textbf{Part II} Chapter 2 $\S$9\Big). Adjusting the first order coefficient by adding the integral with respect to the L\'evy measure and proceeding similarly will give the equivalent conclusions.

\smallskip
Solving this integro-differential equation is a very difficult task and so we construct a way to approximate the solution. First we assume that $h_1(x)=h_2(x) + \epsilon$, for a small $\epsilon\in\mathbb{R}$. This assumption tells us that one subpopulation fairs a little better than the other. Next we turn the difference in the integral into a Taylor series, using $\epsilon$ as the variable, grouping the higher order terms into an error term. Note that we use the function $f$, instead of $u$ given in Equation \eqref{probgen}, to find the solution. Normalizing $f$ and considering the initial conditions will determine $u$. 

\smallskip
To be rigorous, we determine an appropriate function space for $f$. Given $0<y_1<y_2<1$, define $\displaystyle \tilde{y}_1= \hspace{-15pt} \min_{ x \in \mathbb{R} , y \in (y_1,y_2)} \Big\{  y_1, y + \gamma\big( y,x \big)\Big\}$ and $\displaystyle \tilde{y}_2= \hspace{-15pt} \max_{ x \in \mathbb{R} , y \in (y_1,y_2)} \Big\{  y_2, y + \gamma\big( y,x \big)\Big\}$. Notice that $0<\tilde{y}_1<\tilde{y}_2 <1$ since $y + \gamma\big( y,x \big)=0$ if and only if $y=0$, and $y + \gamma\big( y,x \big)=1$ if and only if $y=1$, which further emphasizes Proposition \ref{alltime}. Define $C_{ y_1 y_2}$ as the space of bounded continuous functions that map $[\tilde{y}_1,\tilde{y}_2]$ to $\mathbb{R}$, and take $f \in \Big\{ g \in C_{ y_1 y_2}: g' \in C_{ y_1 y_2} \ \mbox{and} \ g'' \in C_{ y_1 y_2}  \Big\}$.

\smallskip
By the assumption that $h_1(x)=h_2(x) + \epsilon$, we write the integral as a function of $\epsilon$,
 $$
F(\epsilon):= \int_{ \mathbb{R} } \left[ f \left(  \frac{ y[ 1+h_2(x) + \epsilon ] }{ \epsilon y + 1+ h_2(x)} \right) - f(y) \right] \nu(dx).
 $$
We now approximate $F(\epsilon)$ by a Taylor's series around $\epsilon=0$. Notice that $F(0)=0$,  
\begin{equation*}\begin{split}
F'(0) & =  \int_{ \mathbb{R} } f' \left(  \frac{ y \big[ 1+h_2(x) + \epsilon \big] }{ \epsilon y + 1+ h_2(x)} \right) \left( \frac{ y \big[ 1+h_2(x) + \epsilon \big] }{  \epsilon y + 1+ h_2(x) }  \right)' \nu(dx)  
\bigg\vert_{\epsilon=0}
\\ &  = \int_{ \mathbb{R} } f' \left(  \frac{ y \big[ 1+h_2(x) + \epsilon \big] }{ \epsilon y + 1+ h_2(x) } \right) \left(  \frac{ y\big(1-y\big)\big[ 1+h_2(x) \big] }{ \big( \epsilon y + 1+ h_2(x) \big)^2 } \right) \nu(dx) 
\bigg\vert_{\epsilon=0}
\\ & = y\big(1-y\big)  f'(y) \int_{ \mathbb{R} }\frac{1}{1+ h_2(x) }\nu(dx) := C_1 y\big(1-y\big) f'(y),
\end{split}\end{equation*} 
and
\begin{equation*}\begin{split}
F''(0) & = \int_{ \mathbb{R} }\vast\{ f'' \left(  \frac{ y\big[ 1+h_2(x) + \epsilon \big] }{ \epsilon y + 1+ h_2(x) } \right) \left(  \frac{ y\big(1-y\big)\big[ 1+h_2(x) \big] }{ \big( \epsilon y + 1+ h_2(x) \big)^2 } \right)^2 
\\ & + f' \left(  \frac{ y\big[ 1+h_2(x) + \epsilon \big] }{ \epsilon y + 1+ h_2(x) } \right) \left(  \frac{ -2y^2\big(1-y\big)\big[ 1+h_2(x) \big] }{ \big( \epsilon y + 1+ h_2(x) \big)^3 } \right) \vast\}\nu(dx) \bigg\vert_{\epsilon=0}
\\ & = - 2y^2\big(1-y\big) f'(y)\int_{ \mathbb{R} } \frac{1}{\big[ 1+h_2(x) \big]^2 }\nu(dx) + y^2\big(1-y\big)^2 f''(y)\int_{ \mathbb{R} } \frac{1}{\big[ 1+h_2(x) \big]^2 }\nu(dx)
\\ & := -2C_2y^2\big(1-y\big)f'(y)+C_2 y^2\big(1-y\big)^2f''(y),
\end{split}\end{equation*}
where $\displaystyle C_1:= \int_{ \mathbb{R} }\frac{1}{1+ h_2(x) }\nu(dx)$ and $\displaystyle C_2:=\int_{ \mathbb{R} } \frac{1}{\big[ 1+h_2(x)\big]^2 }\nu(dx)$. Thus 
$\displaystyle F(\epsilon) = C_1 y\big(1-y\big) f'(y) \epsilon + \Big( -2C_2y^2\big(1-y\big)f'(y)+C_2 y^2\big(1-y\big)^2f''(y)  \Big)\epsilon^2/2   + O(\epsilon^3).$ Excluding the error term, Equation \eqref{probgen} becomes the ordinary differential equation
\begin{equation}\label{approx}
\tilde{\alpha}_{\epsilon}(y)f'(y) + \tilde{\beta}_{\epsilon}(y) f''(y)=0,
\end{equation}
where
$$
\tilde{\alpha}_{\epsilon}(y):=y \big(1-y \big) \bigg[  \bigg( a_{12}  -a_{22} +\sigma_2^2 + \epsilon \Big(C_1 - \nu\big( \mathbb{R} \big) \Big) \bigg) +\Big\{  \big(a_{11}  - a_{21} - \sigma_1^2 \big) + \big(a_{22}  - a_{12} -\sigma_2^2 \big) -\epsilon^2C_2 \Big\} y \bigg]
$$ 
and
$$
\tilde{\beta}_{\epsilon}(y):= \left(\frac{\sigma^2}{2} + \frac{\epsilon^2}{2}C_2 \right) y^2 \big(1-y \big)^2.
$$

\bigskip
\begin{re}
Without the quadratic term in our estimation, we would have $\displaystyle \tilde{\beta}_{\epsilon}(y)= \left(\frac{\sigma^2}{2} \right) y^2 \big(1-y \big)^2$, and if $\sigma_1=\sigma_2=0$ then $\tilde{\beta}_{\epsilon}(y) \equiv 0$. Hence, the quadratic term is included to insure that $\displaystyle \tilde{\beta}_{\epsilon}(y)>0$ for all $y\in(0,1)$.
\end{re}

\bigskip
The infinitesimal generator, $\displaystyle L_{\epsilon} v(y) := \tilde{\alpha}_{\epsilon}(y)v'(y) + \tilde{\beta}_{\epsilon}(y) v''(y)$, where $\epsilon$ is fixed, generates a continuous stochastic differential equation. We call this ``new'' process $\hat{s}_1(t)$. One may see that the points zero and one are fixed points for $\hat{s}_1(t)$, and that $\hat{s}_1(t)$ is an estimate to the evolution of Equation \eqref{onedim}. The following theorem characterizes the long run behavior of $\hat{s}_1(t)$, which in turn estimates the behavior of Equation \eqref{onedim}.

\bigskip
\begin{thm}
Take $\hat{s}_1(t)$ in the remark above, 
$
A
=
\left(
\begin{array}{cc}
a_{11} &   a_{12}   \\
 a_{21} &  a_{22}   
\end{array}
\right),
$
$\sigma_i^2$ the variance of the $i^{th}$ subpopulation, $C_1$ and $C_2$ defined above, and $\mathbf{y}_0=(y_0,y'_0) \in \Delta_2$. 
\begin{enumerate}[(i)]

\item If $\displaystyle a_{11} -a_{21} < \frac{\sigma_1^2 - \sigma_2^2}{2} - \epsilon \Big( C_1 - \nu\big( \mathbb{R} \big) \Big) + \frac{\epsilon^2}{2} C_2   \ \textnormal{and} \  a_{22} -a_{12}  > \frac{\sigma_2^2 - \sigma_1^2}{2} +  \epsilon \Big( C_1 - \nu\big( \mathbb{R} \big) \Big) - \frac{\epsilon^2}{2} C_2,$
then $\displaystyle P_{ \mathbf{y}_0 } \left( \lim_{t\to\infty} \hat{s}_1(t )=0 \right) =1.$

\item If $\displaystyle a_{11} -a_{21} < \frac{\sigma_1^2 - \sigma_2^2}{2} - \epsilon \Big( C_1 - \nu\big( \mathbb{R} \big) \Big) + \frac{\epsilon^2}{2} C_2  \ \textnormal{and} \  a_{22} -a_{12}  < \frac{\sigma_2^2 - \sigma_1^2}{2} +  \epsilon \Big( C_1 - \nu\big( \mathbb{R} \big) \Big) - \frac{\epsilon^2}{2} C_2 ,$ then
$\displaystyle P_{ \mathbf{y}_0 } \left( \lim\sup_{t\to\infty} \hat{s}_1(t)=1 \right)=P_{ \mathbf{y}_0 } \left(\lim \inf_{t\to\infty} \hat{s}_1(t )=0 \right) =1.$

\item If $\displaystyle a_{11} -a_{21} > \frac{\sigma_1^2 - \sigma_2^2}{2} -\epsilon \Big( C_1 - \nu\big( \mathbb{R} \big) \Big) + \frac{\epsilon^2}{2} C_2 \ \textnormal{and} \  a_{22} -a_{12}  < \frac{\sigma_2^2 - \sigma_1^2}{2} +  \epsilon \Big( C_1 - \nu\big( \mathbb{R} \big) \Big) -\frac{\epsilon^2}{2} C_2,$ then
$\displaystyle P_{ \mathbf{y}_0 } \left( \lim_{t\to\infty} \hat{s}_1(t )=1 \right) =1.$

\item If $\displaystyle a_{11} -a_{21} > \frac{\sigma_1^2 - \sigma_2^2}{2} - \epsilon \Big( C_1 - \nu\big( \mathbb{R} \big) \Big) + \frac{\epsilon^2}{2} C_2  \ \textnormal{and} \  a_{22} -a_{12}  > \frac{\sigma_2^2 - \sigma_1^2}{2} +  \epsilon \Big( C_1 - \nu\big( \mathbb{R} \big) \Big) - \frac{\epsilon^2}{2} C_2  ,$ then
$\displaystyle  P_{ \mathbf{y}_0 } \left( \lim_{t\to\infty} \hat{s}_1(t )=0 \right) = \frac{ f(1) -f(y_0) }{ f(1)-f(0) }$ and $ \displaystyle P_{ \mathbf{y}_0 } \left( \lim_{t\to\infty} \hat{s}_1(t )=1 \right) = \frac{ f(y_0) -f(0) }{ f(1) - f(0) }.$
\end{enumerate}
\end{thm}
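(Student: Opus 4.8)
\section*{Proof Proposal}

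The plan is to treat $\hat{s}_1(t)$ as a regular one-dimensional diffusion on the open interval $(0,1)$ and to read off its asymptotics from the scale function, exploiting that $0$ and $1$ are the only stationary points of $L_\epsilon$. First I would form the scale density by the standard recipe $S'(y) = \exp\left( -\int^y \frac{\tilde{\alpha}_\epsilon(u)}{\tilde{\beta}_\epsilon(u)}\,du \right)$. Since $\tilde{\alpha}_\epsilon$ and $\tilde{\beta}_\epsilon$ share the common factor $y(1-y)$, the integrand collapses to $\frac{B+Du}{\kappa\,u(1-u)}$, where $\kappa := \frac{\sigma^2+\epsilon^2 C_2}{2}$, $B := a_{12}-a_{22}+\sigma_2^2+\epsilon\big(C_1-\nu(\mathbb{R})\big)$, and $B+D := (a_{11}-a_{21})-\sigma_1^2+\epsilon\big(C_1-\nu(\mathbb{R})\big)-\epsilon^2 C_2$. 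A partial-fraction split $\frac{B+Du}{u(1-u)}=\frac{B}{u}+\frac{B+D}{1-u}$ then integrates in closed form, giving (up to a positive constant) $S'(y)=y^{-B/\kappa}(1-y)^{(B+D)/\kappa}$.

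The second step is the boundary classification. The density $S'$ is integrable at $0$ iff $-B/\kappa>-1$, i.e. $B<\kappa$, and integrable at $1$ iff $(B+D)/\kappa>-1$, i.e. $B+D>-\kappa$ (both using $\kappa>0$). A direct rearrangement shows that $B<\kappa$ is exactly the hypothesis $a_{22}-a_{12}>\frac{\sigma_2^2-\sigma_1^2}{2}+\epsilon\big(C_1-\nu(\mathbb{R})\big)-\frac{\epsilon^2}{2}C_2$, and that $B+D>-\kappa$ is exactly $a_{11}-a_{21}>\frac{\sigma_1^2-\sigma_2^2}{2}-\epsilon\big(C_1-\nu(\mathbb{R})\big)+\frac{\epsilon^2}{2}C_2$. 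Hence $S(0^+)$ is finite precisely under the stated $a_{22}-a_{12}$ inequality and $S(1^-)$ is finite precisely under the stated $a_{11}-a_{21}$ inequality; reversing either inequality sends the corresponding scale boundary to $\pm\infty$.

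The third step is to invoke the standard dichotomy for a regular diffusion in natural scale (Karlin and Taylor, or Khasminskii): $Z(t):=S(\hat{s}_1(t))$ is a continuous local martingale on the scale image. In case (i) we have $S(0^+)$ finite and $S(1^-)=+\infty$, so $Z$ is bounded below on an unbounded-above interval and must converge to the finite endpoint, giving $\hat{s}_1(t)\to 0$ a.s.; case (iii) is symmetric, yielding $\hat{s}_1(t)\to 1$. In case (ii) both scale boundaries are infinite, so $Z$ is a local martingale on all of $\mathbb{R}$, which is recurrent and oscillates, producing $\limsup\hat{s}_1=1$ and $\liminf\hat{s}_1=0$ a.s. In case (iv) both $S(0^+)$ and $S(1^-)$ are finite, so $Z$ is a bounded local martingale and converges; an interior limit is ruled out because $\tilde{\beta}_\epsilon>0$ on $(0,1)$ keeps the diffusion nondegenerate there, so the limit is $S(0)$ or $S(1)$, and optional stopping gives $P_{\mathbf{y}_0}(\hat{s}_1\to 0)=\frac{S(1)-S(y_0)}{S(1)-S(0)}$ with the complementary probability for $\hat{s}_1\to 1$. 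Finally, because $f$ solves $L_\epsilon f=0$ and the solution space of this second-order ODE is spanned by the constants and $S$, one has $f=c_1+c_2 S$ with $c_2\neq 0$; substituting cancels $c_2$ and rewrites these ratios verbatim as $\frac{f(1)-f(y_0)}{f(1)-f(0)}$ and $\frac{f(y_0)-f(0)}{f(1)-f(0)}$, matching the stated formulas.

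I expect the principal obstacle to be bookkeeping rather than conceptual: aligning the exponents $-B/\kappa$ and $(B+D)/\kappa$ with the four pairs of inequalities without a sign slip, and, in case (ii), genuinely justifying recurrence rather than mere inaccessibility, i.e. confirming that a natural-scale local martingale on $(-\infty,\infty)$ has infinite quadratic variation and therefore oscillates between the boundaries rather than converging to an interior value. This last point is where I would lean on the cited one-dimensional diffusion classification to close the argument cleanly.
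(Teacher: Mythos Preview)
Your proposal is correct and follows essentially the same route as the paper: both compute the scale density $S'(y)=\exp\!\big(-\!\int \tilde{\alpha}_\epsilon/\tilde{\beta}_\epsilon\big)$ via the same partial-fraction reduction to obtain $y^{-B/\kappa}(1-y)^{(B+D)/\kappa}$, then translate the integrability conditions $-B/\kappa>-1$ and $(B+D)/\kappa>-1$ into the four pairs of inequalities, and finish by invoking the standard one-dimensional boundary classification (the paper cites Gihman--Skorohod, Part~I \S16, Theorem~1, in place of your local-martingale argument). Your notation $(B,D,\kappa)$ matches the paper's $(N_\epsilon,M_\epsilon,\sigma^2/2+\epsilon^2 C_2/2)$ exactly, and your identification $f=c_1+c_2 S$ is precisely how the paper's $f$ arises.
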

\begin{proof}
Using an integrating factor, the function $f$ in Equation \eqref{approx} may be written as
$$
f'(y)= k_1 \exp\left\{ - \int_{y_0} ^{y} \frac{ \tilde{\alpha}_{\epsilon} (z) }{ \tilde{\beta}_{\epsilon}(z) } dz \right\},
$$ 
for some constant $k_1$ and $y_0 \in(y_1,y_2)$. The constants in the integral of the exponent are a bit unwieldy, so we define $N_{\epsilon} = a_{12}  -a_{22} +\sigma_2^2 + \epsilon \Big( C_1 - \nu\big( \mathbb{R} \big) \Big) $ and $M_{\epsilon} = \big(a_{11}  - a_{21} - \sigma_1^2\big) + \big(a_{22}  - a_{12} -\sigma_2^2\big) - \epsilon^2 C_2$. Simplifying the integral yields
\begin{equation}\begin{split}
- \int_{y_0} ^{y} \frac{ \tilde{\alpha}_{\epsilon} (z) }{ \tilde{\beta}_{\epsilon}(z) } dz & = - \left( \frac{\sigma^2}{2} + \frac{\epsilon^2}{2} C_2 \right)^{-1} \int_{y_0} ^{y} \frac{ N_{\epsilon} + M_{\epsilon} z }{ z ( 1- z ) } dz
\\ & =  - \left( \frac{\sigma^2}{2} +\frac{\epsilon^2}{2} C_2 \right)^{-1}  \int_{y_0} ^{y} \frac{ N_{\epsilon} + M_{\epsilon} }{  ( 1- z ) } + \frac{ N_{\epsilon} }{ z } dz
\\ & = \log\left( \Big( y\big/y_0 \Big)^{  - \left( \frac{\sigma^2}{2} +\frac{\epsilon^2}{2} C_2 \right)^{-1} N_{\epsilon}  } \bigg( \Big(1-y\Big) \bigg/  \Big(1-y_0 \Big) \bigg)^{\left( \frac{\sigma^2}{2} + \frac{\epsilon^2}{2} C_2 \right)^{-1} ( N_{\epsilon} +M_{\epsilon} ) }  \right).
\end{split}\end{equation} 
For some constant $k_2$ we have $\displaystyle f(y)  =  k_1 \int_{y_0}^{y} z^{ - \hat{N} } \big(1-z \big)^{ \hat{M}  } dz - k_2$, where $\displaystyle \hat{M}:= \left( \frac{\sigma^2}{2} +\frac{\epsilon^2}{2} C_2 \right)^{-1} ( N_{\epsilon} +M_{\epsilon})$, $\displaystyle \hat{N}:= \left( \frac{\sigma^2}{2} + \frac{\epsilon^2}{2} C_2 \right)^{-1} N_{\epsilon}$, with the added assumption $\hat{N} \neq 1$.

\medskip
For $\displaystyle  u(y) := \frac{ f(y) - f(y_1) }{ f(y_2) - f(y_1) }$, $f(y)=f(y_1)$ for $y\in[0,y_1]$, and $f(y)=f(y_2)$ for $y\in[y_2,1]$, we see that $\tilde{\alpha}_{\epsilon} (y)u'(y) + \tilde{\beta}(y)u''(y)=0$ for $y\in(y_1,y_2)$, $u(y)=0$ for $y\in[0,y_1]$, and $u(y)=1$ for $y\in[y_2,1]$. 

\smallskip
To determine when $f(y)$ will either explode or be finite when $y$ approaches 0 or 1, we derive condition for $\hat{N}\gtrless -1$ and $\hat{M} \gtrless -1$. For the case when $y \to 0$, we notice that
\begin{equation*}\begin{split}
\hat{N}=-\left( \frac{\sigma^2}{2} +\frac{\epsilon^2}{2} C_2 \right)^{-1} N_{\epsilon}  > -1 & \iff    a_{22} -a_{12}  -\sigma_2^2 - \epsilon \Big(C_1 - \nu\big( \mathbb{R} \big) \Big) > -\sigma^2/2 - \frac{\epsilon^2}{2} C_2
\\& \iff a_{22} -a_{12}  > \frac{\sigma_2^2 - \sigma_1^2}{2} +  \epsilon\Big(C_1 - \nu\big( \mathbb{R} \big) \Big)  - \frac{\epsilon^2}{2} C_2
\end{split}\end{equation*}
and
$$
\hat{N}=- \left( \frac{\sigma^2}{2} + \frac{\epsilon^2}{2} C_2 \right)^{-1} N_{\epsilon}  < -1 \iff a_{22} -a_{12}  < \frac{\sigma_2^2 - \sigma_1^2}{2} +  \epsilon \Big( C_1 - \nu\big( \mathbb{R} \big) \Big) - \frac{\epsilon^2}{2} C_2.
$$
And when $y \to 1$, we see that
\begin{equation*}\begin{split}
\hat{M}=\left( \frac{\sigma^2}{2} +\frac{\epsilon^2}{2} C_2 \right)^{-1} ( N_{\epsilon} +M_{\epsilon} ) > -1 & \iff a_{11} -a_{21}  -\sigma_1^2 + \epsilon \Big( C_1 - \nu\big( \mathbb{R} \big) \Big) - \epsilon^2 C_2 >  - \sigma^2/2 -\frac{\epsilon^2}{2} C_2
\\& \iff a_{11} -a_{21} > \frac{\sigma_1^2 - \sigma_2^2}{2} - \epsilon \Big( C_1 - \nu\big( \mathbb{R} \big) \Big) + \frac{\epsilon^2}{2} C_2. 
\end{split}\end{equation*}
and
$$
\hat{M}=\left( \frac{\sigma^2}{2} + \frac{\epsilon^2}{2} C_2 \right)^{-1} ( N_{\epsilon} +M_{\epsilon} ) < -1 \iff a_{11} -a_{21} < \frac{\sigma_1^2 - \sigma_2^2}{2} - \epsilon \Big( C_1 - \nu\big( \mathbb{R} \big) \Big) +\frac{\epsilon^2}{2} C_2.
$$

Considering each inequality above, Theorem 1 on page 119 (Part1 $\S$16) in Gihman and Skorohod \cite{GS72} gives us the rest.
\end{proof}

\bigskip
The results in this theorem are very similar to the results found by Fudenberg and Harris \cite{FH92}, with the added or subtracted piece $\displaystyle \epsilon \Big(C_1 - \nu\big( \mathbb{R} \big) \Big) - \frac{\epsilon^2}{2} C_2$. We first note that the term $\displaystyle - \frac{\epsilon^2}{2} C_2$ lessens the affect that  $\displaystyle \epsilon \Big(C_1 - \nu\big( \mathbb{R} \big) \Big)$ brings to the dynamics. To see how this term could affect the dynamic, suppose that $h_2(x) < 0$ for a significant amount of $x \in \mathbb{R}$ so that the inequality $C_1 > \nu\big( \mathbb{R} \big)$ holds, and if $\epsilon>0$, i.e., the $1^{st}$ subpopulation favors better, then $\displaystyle \epsilon \Big(C_1 - \nu\big( \mathbb{R} \big) \Big) - \frac{\epsilon^2}{2} C_2$ could make $(1,0)$ stochastically stable. To illustrate this scenario, consider when $\displaystyle a_{11} -a_{21} < \frac{\sigma_1^2 - \sigma_2^2}{2} $. If $\displaystyle \epsilon \Big(C_1 - \nu\big( \mathbb{R} \big) \Big) - \frac{\epsilon^2}{2} C_2>\bigg|  \frac{\sigma_1^2 - \sigma_2^2}{2} -\big( a_{11} -a_{21} ) \bigg|$ then $\displaystyle a_{11} -a_{21} > \frac{\sigma_1^2 - \sigma_2^2}{2}- \epsilon \Big(C_1 - \nu\big( \mathbb{R} \big) \Big) + \frac{\epsilon^2}{2} C_2$, which negates the behavior of the stochastic replicator dynamic.

\smallskip
Interestingly, if $h_2(x) > 0$ for a sufficient amount of $x \in \mathbb{R}$ so that $C_1 < \nu\big( \mathbb{R} \big)$ holds, then the term $\displaystyle \epsilon \Big(C_1 - \nu\big( \mathbb{R} \big) \Big) - \frac{\epsilon^2}{2} C_2$ will be negligible. Since there is no negative impact on either subpopulation, (or at least enough to be significant),this is what one would expect.


\section{Strict Nash and Stochastic Stability in the Presence of Continuous and Instantaneous Random Perturbations}

For the case with many subpopulations, Imhof \cite{I05} exhibited many useful methods to help determine the long-run behavior of the stochastic replicator dynamic, which helped the author answer crucial questions in evolutionary game theory. The rest of the paper is devoted to answering the questions that were posed by Imhof. Although some of Imhof's methods are applicable, since the extended stochastic replicator dynamic is right-continuous, a nontrivial extension of these methods is needed. 

\smallskip
Take $\{ \mathbf{e}_1,   \mathbf{e}_2,  \ldots,   \mathbf{e}_n \}$ as the standard basis of $\mathbb{R}^n$. For the Euclidean norm denoted as $| \cdot |$, define $U_{\delta}( \mathbf{y}' ) := \{ \mathbf{y}\in\Delta_n : | \mathbf{y}' - \mathbf{y} | < \delta \}$,  and for a Borel set $G$, $\tau_{G}:=\inf\{ t>0 : \mathbf{s}(t) \in G\}$.  For the general $n$ subpopulation model, $s_i(t)$ has the form
\begin{equation}\begin{split}
ds_i(t) & = s_i(t-) \left[  \big( As(t-) \big)_i -  \sum_{j }  s_j(t-)\big( As(t-) \big)_j  + \sum_{j} s_j(t-) \sigma_j^2  - s_i(t-) \sigma_i^2   \right.
\\ & \left.+ \int_{  \mathbb{R} } \left( \frac{1 + h_i(x) }{ 1+ \sum_{ j } s_j(t-)h_j (x) } - 1  + \sum_{ j } s_j(t-)h_j (x) -h_i (x) \right)  \nu(dx) \right]dt
\\ & +  s_i(t-) \bigg( \sigma_i  dW_i(t)   -  \sum_{j }s_j(t-) \sigma_j dW_j(t) \bigg)
\\ & +s_i(t-)  \int_{  \mathbb{R} } \left( \frac{ 1 + h_i(x) }{ 1+ \sum_{ j } s_j(t-)h_j (x) } - 1 \right) \tilde{N}(dt,dx).
\end{split}\end{equation}

\bigskip
Define $\mathbf{h}( x )=\Big( h_1(x),h_2(x),\ldots,h_n(x) \Big)^T$ and $\mathbf{W}( t )= \Big( W_1(t),W_2(t),\ldots,W_n(t) \Big)^T$. With a little work, one can see that 
\begin{equation} \label{ndim}
d\mathbf{s}(t) = D^1\big( \mathbf{s}(t-) \big)dt + D^2\big( \mathbf{s}(t-) \big)d\mathbf{W}(t) + \int_{ \mathbb{R} } D^3( \mathbf{s}(t-)  )\tilde{N}(dt,dx),
\end{equation}
where

\begin{equation*}\begin{split}
D^1\big(\mathbf{y} \big) & := \Big[ \diag(y_1,\ldots,y_n) - \mathbf{y}\mathbf{y}^T \Big] \Big[ A - \diag(\sigma_1^2,\ldots, \sigma_n^2 ) \Big] \mathbf{y}  
\\ & + \int_{  \mathbb{R} }\bigg( \mathbf{y}\mathbf{h}(x)^T - \diag( h_1(x),\dots,h_n(x) ) 
\\ & + \frac{1}{ 1+ \mathbf{y}^T \mathbf{h}(x) }  \diag(1+h_1(x), \dots,1+h_n(x) ) - \diag(1,\dots,1) \bigg) \mathbf{y} \ \nu(dx) ,
\end{split}\end{equation*}

\begin{equation*}\begin{split}
D^2 \big(\mathbf{y} \big) :=  \Big[ \diag(y_1,\ldots,y_n) - \mathbf{y}\mathbf{y}^T \Big] \diag(\sigma_1,\ldots, \sigma_n ),
\end{split}\end{equation*}
and 

\begin{equation*}\begin{split} 
D^3 \big( \mathbf{y} \big) := \bigg(  \frac{1}{ 1+ \mathbf{y}^T \mathbf{h}(x) }  \diag(1+h_1(x), \dots,1+h_n(x) ) - \diag(1,\dots,1) \bigg)\mathbf{y}.
\end{split}\end{equation*}

\smallskip
Denote $\mathcal{A}_J$ as the infinitesimal generator for our process defined by Equation \eqref{ndim}. By Theorem 2 (Part II Chapter 2 $\S$6) in Gihman and Skorohod \cite{GS72}, we see that
\begin{equation*}\begin{split}
\mathcal{A}_J f( \mathbf{y} ) & = \sum_{j} \tilde{D}^1_j\big( \mathbf{y} \big) \frac{\partial f}{\partial y_j}( \mathbf{y} ) + \frac{1}{2}\sum_{j,k} \gamma_{jk}( \mathbf{y} ) \frac{\partial^2 f}{\partial y_j \partial y_k}( \mathbf{y} )
\\ &  +  \int_{  \mathbb{R} } \bigg( f \big( D^3 \big( \mathbf{y} \big) + \mathbf{y} \big) - f( \mathbf{y} ) \bigg) \nu(dx),
\end{split}\end{equation*}
where $D^i_j$ is the $j^{th}$ coordinate of the function $D^i$, $\displaystyle \gamma_{jk}( \mathbf{y} )  := \sum_{l} c_{jl} ( \mathbf{y} ) c_{kl}( \mathbf{y} )$
for 
$
c_{jl} ( \mathbf{y} )
:=
\left\{
\begin{array}{cc}
y_j (1-y_j) \sigma_j, &    j=l  \\
-y_j y_l\sigma_l  &       j \neq l 
\end{array}
\right. ,
$
and 
$$
 \tilde{D}^1_i \big( \mathbf{y} \big) := y_i(\mathbf{e}_i - \mathbf{y})^T \Big[ A -\diag(\sigma_1^2,\dots,\sigma_n^2) \Big] \mathbf{y}  +  \int_{  \mathbb{R} } y_i \left( \sum_{k}y_k h_k(x) -h_i(x) \right) \nu(dx).
$$

\bigskip
A interesting behavior of the stochastic replicator dynamic is that it evolves closely to $\mathbf{e}_k$, for some $k \in \{1, \ldots,n\}$ \cite{I05}. Since the only stationary points of the stochastic replicator dynamic are the corner points of the simplex, this is how we would expect the dynamics to behave. Since the jump functions could potentially enhance this characteristic, the extended stochastic replicator dynamic should evolve similarly. We show this behavior below. The proof of this behavior is an extension of the derivation given by Imhof \cite{I05}. Since the jump functions are bounded, the result is fairly natural.

\bigskip
\begin{thm}
Take $\mathbf{s}(t)$  to be an $n$-dimensional stochastic replicator dynamic defined by Equation \eqref{ndim},  an arbitrary payoff matrix $A$, and for $\epsilon>0$, define $\tau_{\epsilon}:=\inf \Big\{ t>0 : s_{k}(t) \geq 1-\epsilon \ \mbox{for some} \ k\in\{1,2,\ldots,n\} \Big\}$. Then for $\mathbf{y}\in\Delta_n$, 
$$
\mathbb{E}_{ \mathbf{y} } [ \tau_{\epsilon} ] < \infty,
$$ 
and
$$
P_{ \mathbf{y} } \bigg( \sup_{t>0} \max\{ s_1(t), \ldots, s_n(t) \}=1  \bigg)=1.
$$
\end{thm}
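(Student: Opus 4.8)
The plan is to prove the moment bound $\mathbb{E}_{\mathbf{y}}[\tau_\epsilon]<\infty$ first and then deduce the almost-sure statement as a routine corollary. Indeed, once I know $\tau_{\epsilon}<\infty$ almost surely for every $\epsilon>0$, then at the (finite) time $\tau_\epsilon$ the right-continuous process satisfies $\max_k s_k(\tau_\epsilon)\ge 1-\epsilon$, so $\sup_{t>0}\max_k s_k(t)\ge 1-\epsilon$ almost surely; intersecting this event over the countable family $\epsilon=1/m$ gives $\sup_{t>0}\max_k s_k(t)=1$ almost surely. Thus the whole content is the finiteness of $\mathbb{E}_{\mathbf{y}}[\tau_\epsilon]$, and I would obtain it not from a Lyapunov/Dynkin estimate (the drift coming from the arbitrary matrix $A$ has no useful sign, so $\sum_i\log s_i$ and its relatives do not have a sign-definite generator) but from a \emph{regeneration} argument: I show that from any state in $D_\epsilon:=\{\mathbf{y}\in\Delta_n:\max_k y_k<1-\epsilon\}$ there is a probability bounded below by a constant $p_0>0$, uniformly in the state, of reaching $\{\max_k s_k\ge 1-\epsilon\}$ within one unit of time.

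To set this up I would pass to the logarithmic populations $Y_i(t)=\log r_i(t)$, which are well defined by Remark \ref{re1} since $\inf_x h_i(x)>-1$. Rewriting $\tilde N=N-\nu\,dt$, the representation in Remark \ref{re1} becomes $dY_i(t)=\tilde\beta_i(t)\,dt+\sigma_i\,dW_i(t)+\int_{\mathbb{R}}\log[1+h_i(x)]N(dt,dx)$, where the total drift $\tilde\beta_i=(A\mathbf{s})_i-\tfrac{\sigma_i^2}{2}-\int_{\mathbb{R}}h_i(x)\nu(dx)$ is \emph{bounded}, say $|\tilde\beta_i|\le b$, uniformly on $\overline{\Delta}_n$ (this uses that $A\mathbf{s}$ is continuous on the compact simplex and that each $h_i$ is bounded with $\nu(\mathbb{R})<\infty$). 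The point of the change of variables is that $s_i=e^{Y_i}/\sum_j e^{Y_j}$, so if $k$ is a coordinate with $Y_k-Y_l\ge L$ for every $l\ne k$, then $s_k\ge(1+(n-1)e^{-L})^{-1}$; hence $\max_k s_k\ge 1-\epsilon$ is guaranteed once the log-gaps from the leading coordinate all exceed a threshold $L=L(\epsilon,n)$.

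The one-step estimate is the heart of the proof. Fix $\mathbf{y}\in D_\epsilon$ and let $k$ be its largest coordinate, so the initial log-gaps satisfy $Y_k-Y_l\ge 0$ for all $l$. Over a unit interval I would first condition on the event that $N$ has no atoms in that interval, which has probability $e^{-\nu(\mathbb{R})}>0$ and is independent of the Brownian motions; on this event each pure-jump term is constant, so the increment of each $Y_i$ is exactly $\int\tilde\beta_i\,du+\sigma_i\Delta W_i$ with $|\int\tilde\beta_i\,du|\le b$ pathwise. Using the independence of $W_1,\dots,W_n$ and the standing nondegeneracy $\sigma_i>0$, the favorable Gaussian event $\{\Delta W_k\ge a\}\cap\bigcap_{l\ne k}\{\Delta W_l\le -a\}$ has probability $q=\Phi(-a)^{n}$, with $\Phi$ the standard normal distribution function, which is strictly positive and independent of $\mathbf{y}$; choosing $a=(L+2b)/\min_i\sigma_i$ forces every terminal gap to satisfy $Y_k-Y_l\ge -2b+\sigma_k a+\sigma_l a\ge L$, hence $\max_l s_l\ge 1-\epsilon$ at the end of the interval. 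Since there are only $n$ possible choices of leading coordinate and $q$ is the same for each, the combined probability is bounded below by $p_0:=e^{-\nu(\mathbb{R})}q>0$, uniformly over $\mathbf{y}\in D_\epsilon$. Applying the Markov property at the integer times then gives $P_{\mathbf{y}}(\tau_\epsilon>m+1\mid\mathcal{F}_m)\le 1-p_0$ on $\{\tau_\epsilon>m\}$, whence $P_{\mathbf{y}}(\tau_\epsilon>m)\le(1-p_0)^m$ and $\mathbb{E}_{\mathbf{y}}[\tau_\epsilon]\le\sum_{m\ge 0}(1-p_0)^m=p_0^{-1}<\infty$.

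The step I expect to be the main obstacle is establishing the uniform lower bound $p_0$ rigorously rather than heuristically: one must (i) justify the clean additive decomposition of $Y_i$ on the no-jump event and the pathwise bound on the drift integral; (ii) verify that conditioning on $\{N$ has no jump in $[m,m+1]\}$ is legitimately independent of the driving Brownian increments and does not distort their law; and (iii) confirm uniformity in both the starting state and the state-dependent identity of the leading coordinate $k$, so that $p_0$ genuinely does not depend on $\mathbf{y}$. These are precisely the points where the jump term forces the meticulous care alluded to in the introduction; the boundedness of the $h_i$ and the finiteness of $\nu$ are exactly what keep them manageable.
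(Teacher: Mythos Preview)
Your argument is correct and genuinely different from the paper's. The paper does \emph{not} use a regeneration/geometric-trial argument; it constructs a Lyapunov function $g(\mathbf{y})=ne^{\alpha}-\sum_k e^{\alpha y_k}$ and shows, after bounding the drift, diffusion, and jump contributions separately, that $\mathcal{A}_J g(\mathbf{y})\le -\alpha e^{\alpha/n}/n$ whenever all coordinates are below $1-\epsilon$ (provided $\alpha$ is chosen large enough that the diffusion term, of order $\alpha^2\sigma_{\min}^2$, dominates the bounded drift and jump pieces). Dynkin's formula then gives $\mathbb{E}_{\mathbf{y}}[\tau_\epsilon]\le n^2 e^{\alpha}/\alpha$. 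So your remark that ``$\sum_i\log s_i$ and its relatives do not have a sign-definite generator'' is true for the obvious candidates but misleading as a reason to abandon the Lyapunov route: the paper's trick is precisely to take an exponential-type function whose second-order term scales like $\alpha^2$ and can be made to overwhelm the sign-indefinite first-order term.

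That said, your approach has real advantages. It is more elementary (no clever test function, just bounded drift plus nondegenerate Gaussian noise plus independence of $N$ and $\mathbf{W}$), and it immediately yields a geometric tail $P_{\mathbf{y}}(\tau_\epsilon>m)\le(1-p_0)^m$, hence finiteness of \emph{all} moments of $\tau_\epsilon$. The paper needs higher moments later (Remark~4.1) and obtains them by a somewhat dubious argument that $\tilde\tau_\epsilon^n$ is again a stopping time; your geometric tail makes that remark unnecessary. The Lyapunov method, on the other hand, produces an explicit (if $\alpha$-dependent) bound on $\mathbb{E}_{\mathbf{y}}[\tau_\epsilon]$ and fits the standard machinery used elsewhere in the paper. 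Your caveats (i)--(iii) are the right places to be careful, but under the paper's standing hypotheses ($h_i$ bounded, $\nu(\mathbb{R})<\infty$, $\sigma_i>0$, $N$ independent of $\mathbf{W}$) they are routine.
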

\begin{proof}
For $\alpha>0$ and $\mathbf{y} \in \overline{ \Delta}_n$ define the positive function $\displaystyle g( \mathbf{y} )= ne^{\alpha} - \sum_{k}e^{\alpha y_k}$. Define the ``new" payoff matrix $\tilde{A}:=A-\diag(\sigma_1^2,\ldots,\sigma_n^2)$,  and set  $\mathcal{A}_J$ as the infinitesimal generator. Then
\begin{equation*}\begin{split}
\mathcal{A}_J g( \mathbf{y} ) & = -\alpha\sum_{k} y_k (\mathbf{e}_k - \mathbf{y})^T \tilde{A} \mathbf{y} e^{\alpha y_k} - \frac{\alpha^2}{2} \sum_{k} y_k^2 \bigg(  \sigma_k^2(1-y_k)^2 + \sum_{j \neq k} \sigma_j^2 y_j  \bigg)e^{\alpha y_k}
\\ & -\alpha  \int_{  \mathbb{R} } \sum_{k} y_k \left( \sum_{j}y_j h_j(x) -h_k(x) \right)e^{\alpha y_k} \nu(dx)
\\ & + \int_{  \mathbb{R} } \Bigg[  \sum_{k}\exp\{ \alpha y_k\} -\sum_{k}\exp\bigg\{ \frac{ \alpha y_k \big( 1 + h_k(x) \big) }{ 1+ \sum_{ j } y_j h_j (x) } \bigg\}  \Bigg]\nu(dx).
\\ & := I + II +III,
\end{split}\end{equation*}
respectively. We now determine upper bounds for each term.
\smallskip 
For $\displaystyle \sigma_{ \textbf{min} }:= \min \{ \sigma_1, \ldots,  \sigma_n \}$ and a constant $\beta>0$ such that $\Big| (\mathbf{e}_k - \mathbf{y})^T \tilde{A} \mathbf{y} \Big| \leq \beta$ for all $\mathbf{y} \in \Delta_n$ and all $k\in\{1,\ldots, n\}$, then one can see that 
\begin{equation} \label{gbeh1}
 I=-\alpha\sum_{k} y_k (\mathbf{e}_k - \mathbf{y})^T \tilde{A} \mathbf{y} e^{\alpha y_k} - \frac{\alpha^2}{2} \sum_{k} y_k^2 \bigg(  \sigma_k^2(1-y_k)^2 + \sum_{j \neq k} \sigma_j^2 y_j  \bigg)e^{\alpha y_k}  \leq \alpha\sum_{k} y_k e^{\alpha y_k}\bigg[ \beta - \frac{ \alpha \sigma^2_{ \textbf{min} }  }{2} y_k \big( 1- y_k \big)^2 \bigg].
\end{equation}
Furthermore, for $\displaystyle \kappa_{ \textbf{max} } := \sup_{ x\in  \mathbb{R} } \max\{h_1(x), \ldots, h_n(x) \}$,  $\displaystyle \kappa_{ \textbf{min} } := \inf_{ x\in  \mathbb{R} } \min\{h_1(x), \ldots, h_n(x) \}$, and $\displaystyle M:=\int_{ \mathbb{R}  } \Big( \kappa_{ \textbf{max} } - \kappa_{ \textbf{min} } \Big)\nu(dx)$, we have the inequality
\begin{equation} \label{gbeh2}
 II=\alpha  \int_{  \mathbb{R}  } \sum_{k} y_k \left( - \sum_{j}y_j h_j(x)  + h_k(x) \right)e^{\alpha y_k} \nu(dx) \leq  \alpha  \sum_{k} y_k e^{\alpha y_k} M.
\end{equation}
Recalling the inequality $-e^x \leq -1-x$ for $x>0$, we consider
\begin{equation}\begin{split}\label{gbeh3}
III=& \int_{  \mathbb{R} } \Bigg[  \sum_{k}\exp\{ \alpha y_k\} -\sum_{k}\exp\bigg\{ \frac{ \alpha y_k \big( 1 + h_k(x) \big) }{ 1+ \sum_{ j } y_j h_j (x) } \bigg\}  \Bigg]\nu(dx) \leq  \sum_{k} \int_{  \mathbb{R} } \Bigg[  \exp\{ \alpha y_k\} -1 - \frac{ \alpha y_k \big( 1 + h_k(x) \big) }{ 1+ \sum_{ j } y_j h_j (x) } \Bigg]\nu(dx)
\\ & =  \sum_{k} \int_{  \mathbb{R} } \Bigg[  \sum_{n=0}^{\infty} \frac{ ( \alpha y_k)^n}{n!} -1 - \frac{ \alpha y_k \big( 1 + h_k(x) \big) }{ 1+ \sum_{ j } y_j h_j (x) } \Bigg]\nu(dx)
 =  \sum_{k} \alpha y_k \int_{  \mathbb{R} } \Bigg[  \sum_{n=1}^{\infty} \frac{ ( \alpha y_k)^{n-1}}{n!} - \frac{  1 + h_k(x)  }{ 1+ \sum_{ j } y_j h_j (x) } \Bigg]\nu(dx)
\\ & \leq   \sum_{k} \alpha y_k \int_{  \mathbb{R} } \Bigg[  \sum_{n=1}^{\infty} \frac{ ( \alpha y_k)^{n-1}}{n!} \Bigg]\nu(dx)
 =   \sum_{k} \alpha y_k \exp\{ \alpha y_k\} \int_{  \mathbb{R} } \Bigg[ \exp\{ - \alpha y_k\} \sum_{n=1}^{\infty} \frac{ ( \alpha y_k)^{n-1}}{n!} \Bigg]\nu(dx)
\\ & \leq  \alpha \sum_{k}  y_k \exp\{ \alpha y_k\} \nu \big( \mathbb{R} \big).
\end{split}\end{equation}

\bigskip
Collecting Equations \eqref{gbeh1}, \eqref{gbeh2}, and \eqref{gbeh3}, we see that 
$$
\mathcal{A}_J g( \mathbf{y} ) \leq \alpha\sum_{k} y_k e^{\alpha y_k}\bigg[ \Big( \beta + M + \nu \big( \mathbb{R} \big) \Big) - \frac{ \alpha \sigma^2_{ \textbf{min} }  }{2} y_k \big( 1- y_k \big)^2 \bigg]
$$ 
Choose $\alpha>0$ large enough that $\displaystyle \alpha \frac{  \sigma^2_{ \textbf{min} }  }{2} y \big( 1- y \big)^2 \geq \Big( \beta + M + \nu \big( \mathbb{R} \big) \Big)n +1$, and for an arbitrarily small $\epsilon>0$, take $\mathbf{y} \in  \Delta_n$ such that $y_i \leq 1 - \epsilon$ for all $i$. For our $\mathbf{y}$, there is at least one $y_k$ such that $\displaystyle y_k \geq \frac{1}{n}$ and hence
\begin{equation*}\begin{split}
\mathcal{A}_J g( \mathbf{y} ) & \leq \alpha \Big( \beta + M + \nu \big( \mathbb{R} \big) \Big) \sum_{ k: \ y_k < 1/n } y_k e^{\alpha y_k} + \alpha \sum_{k: \ y_k \geq 1/n } y_k e^{\alpha y_k} \Big(  -(n-1)\Big( \beta + M + \nu \big( \mathbb{R} \big)  \Big) - 1  \Big)
\\ & \leq \alpha \Big( \beta + M + \nu \big( \mathbb{R} \big) \Big)(n-1) \frac{ e^{\alpha/n} }{n} + \alpha \frac{ e^{\alpha / n} }{ n } \Big( -(n-1)\Big( \beta + M + \nu \big( \mathbb{R} \big) \Big) - 1 \Big)
\\ & = -\alpha \frac{ e^{\alpha/n} }{n}.
\end{split}\end{equation*}

\bigskip
Now by Dynkin's formula for every finite $T$,
\begin{equation*}\begin{split}
0 \leq \mathbb{E}_{ \mathbf{y} } \Big[ g\Big( \mathbf{s}\big( \tau_{ \epsilon } \wedge T \big) \Big) \Big] & = g( \mathbf{y} ) + \mathbb{E}_{ \mathbf{y} } \left[  \int_{0}^{  \tau_{ \epsilon } \wedge T } \mathcal{A}_J g\big( \mathbf{s}(t) \big) dt \right]
\\ & \leq ne^{\alpha} -\alpha \frac{ e^{\alpha/n} }{n}  \mathbb{E}_{ \mathbf{y} } \Big[ \tau_{ \epsilon } \wedge T \Big].
\end{split}\end{equation*}
Therefore, by the monotone convergence theorem, letting $T \to \infty$ yields the inequality $\displaystyle \mathbb{E}_{ \mathbf{y} } \big[ \tau_{ \epsilon } \big] \leq n^2 \frac{ e^{\alpha} }{\alpha}$.

\bigskip
Finally, take $\epsilon=1/m$ for $m\in\mathbb{N}$. Then $\displaystyle P_{ \mathbf{y} } \bigg( \sup_{t>0} \max\{ s_1(t), \ldots, s_n(t) \} \geq 1-1/m  \bigg)=1$, and therefore
\begin{equation*}
1  = P_{ \mathbf{y} } \bigg( \bigcap_{m=1}^{\infty} \Big\{ \sup_{t>0} \max\{ s_1(t), \ldots, s_n(t) \} \geq 1-1/m \Big\}  \bigg) = P_{ \mathbf{y} } \bigg( \sup_{t>0} \max\{ s_1(t), \ldots, s_n(t) \} = 1  \bigg).
\end{equation*}
\end{proof}

\bigskip
\begin{re}
Showing that $\displaystyle \mathbb{E}_{ \mathbf{y} } [ \tau_{\epsilon} ] < \infty$ only uses that $\tau_{\epsilon}$ is a stopping time. Hence, defining $\displaystyle \tilde{\tau}_{\epsilon}= \max\left\{ 1, \tau_{\epsilon} \right\}$,  one can see that $\tilde{\tau}_{\epsilon}^n$ is a stopping time for all $n\in\mathbb{N}$, and hence $\displaystyle \mathbb{E}_{ \mathbf{y} } [ \tilde{\tau}_{\epsilon}^n ] < \infty$. Therefore, we are able to conclude that $\displaystyle \mathbb{E}_{ \mathbf{y} } [ \tau_{\epsilon}^n ] < \infty$.
\end{re}

We call a strategy $\mathbf{p}\in \overline{ \Delta}_n$ a strict Nash Equilibrium if for $\mathbf{q}\in \overline{ \Delta}_n$ such that $\mathbf{q} \neq \mathbf{p}$, $\mathbf{q}^T A \mathbf{p} < \mathbf{p}^T A \mathbf{p}$. We examine how compensated random jumps and white noise affect the stability of replicator dynamics to strict Nash Equilibria. Throughout this section, take pure strategy $S_k$ as a strict Nash Equilibria, i.e., $a_{kk}>a_{jk}$ for all $j \neq k$. Since each jump is able to impact stability, we define the functions $\displaystyle \psi_{\min}^k(x):= \min_{  j \neq k }h_j(x)$, $\displaystyle \psi_{\max}^k(x):= \max_{ j \neq k }h_j(x)$, and $\displaystyle \psi_{\max}(x):= \max_{j }h_j(x)$.  Furthermore, for the matrix $\tilde{A}:=A -\diag(\sigma_1^2, \dots, \sigma_n^2)$, define $\displaystyle \beta : = \max\Big\{ | \tilde{a}_{ji} | : 1 \leq j,i \leq n \Big\}$. 

\smallskip
The results in this section are a generalization of the conditions derived by Imhof's \cite{I05}, where the addition of the jump functions creates situations for stability that were not possible with the stochastic replicator dynamic.  For example, Imhof's result restrict the initial condition to be in a specified neighborhood of the pure strict Nash Equilibria, while certain characteristics of the jump functions do not require such a neighborhood. The theorem below illustrates this result. 

\smallskip
Although the Lyapunov analysis given in this section may seem fairly straightforward, a general assumption of jump functions is a bit too unwieldy. To adjust for this complexity, we assume either all of the jump functions are either nonnegative or all of the jump functions are nonpositive. We first consider the case when all of the jump functions are nonnegative, and for the nonnegative characteristic, we define the integrals 
$$
\displaystyle I_1^k= \int_{  \mathbb{R} } \frac{ \big( h_k(x) - \psi_{\min}^k(x) \big)^2 }{1+ \psi_{\max}(x) } \nu(dx)
$$ 
and 
$$
\displaystyle I_2^k = \int_{  \mathbb{R}  } \frac{  \psi_{\min}^k(x)^2 + h_k(x)  - \big(1+h_k(x) \big) \psi_{\max}^k(x) }{1+ \psi_{\max}(x) } \nu(dx).
$$

\bigskip
\begin{thm}
Take the payoff matrix $A$ and the process $\mathbf{s}(t)$ defined in Equation \eqref{ndim}. Assume that for the pure strategy $S_k$ and the corresponding variance $\sigma_k^2$, we have the inequality $a_{kk} > a_{jk} + \sigma_k^2$ for all $j \neq k$, $h_i(x)$ is nonnegative for all $i$, and $-2\beta + I^k_2 \geq 0$. Furthermore, for $\alpha>0$, where $\alpha + a_{jk} < a_{kk} -\sigma_k^2$, assume $\alpha + 2\beta - I_1^k \geq 0$. Then
\begin{equation}
P_{ \tilde{ \mathbf{y} } }\left( \lim_{t\to\infty} \mathbf{s}(t)= \mathbf{e}_k \right) \geq  \tilde{y}_{k} .
\end{equation}
\end{thm}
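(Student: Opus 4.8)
The plan is to build a Lyapunov function adapted to the target vertex $\mathbf{e}_k$ and to show that it is a nonnegative supermartingale along $\mathbf{s}(t)$. The natural candidate is the proximity function $V(\mathbf{y}) = \sum_{j\neq k} y_j/y_k = (1-y_k)/y_k$, which vanishes at $\mathbf{e}_k$ and blows up as $y_k\to 0$; near $\mathbf{e}_k$ the strict Nash gap should drive each ratio $y_j/y_k$ to zero. First I would fix the constant $\alpha>0$ furnished by the hypotheses, for which $\alpha + a_{jk} < a_{kk}-\sigma_k^2$ holds for every $j\neq k$. This uniform gap is exactly what turns the deterministic drift of each $y_j/y_k$ negative to leading order near the vertex, and it is the quantity against which the two remaining inequalities are calibrated.

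Next I would compute $\mathcal{A}_J V$ and split it, using the generator formula, into the first-order drift built from $\tilde{D}^1$, the second-order diffusion piece built from the $\gamma_{jk}$, and the jump functional $\int_{\mathbb{R}}\big(V(\mathbf{y}+D^3(\mathbf{y}))-V(\mathbf{y})\big)\nu(dx)$. Since $V$ depends only on $y_k$, the diffusion piece reduces to $y_k^{-3}\gamma_{kk}$, a nonnegative convex contribution that works against the supermartingale property and is the term one must dominate. The drift piece factors as $\sum_{l\neq k}(y_l/y_k)\,(\mathbf{e}_l-\mathbf{e}_k)^T\tilde{A}\mathbf{y}$, whose coefficients are bounded in modulus by $2\beta$ using $|\tilde{a}_{ji}|\le\beta$. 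The jump functional is the crux: substituting $y_i+D_i^3(\mathbf{y}) = (1+h_i(x))\,y_i/(1+\mathbf{y}^T\mathbf{h}(x))$ and recombining it with the compensator integral hidden in $\tilde{D}^1$ produces integrands that are quadratic in the jump functions, of the form $h_k(h_l-h_k)/(1+h_k)$. Using $h_i(x)\ge 0$ together with the envelopes $\psi_{\min}^k,\psi_{\max}^k,\psi_{\max}$ I would bound these uniformly in $\mathbf{y}$ so that their $\nu$-integrals collapse onto $I_1^k$ and $I_2^k$; the two hypotheses $\alpha+2\beta-I_1^k\ge 0$ and $-2\beta+I_2^k\ge 0$ are then precisely the sign conditions that make the coefficient of every $y_l/y_k$ sufficiently negative to absorb the convex diffusion term, yielding $\mathcal{A}_J V(\mathbf{y})\le 0$ on all of $\Delta_n$ and hence, by Dynkin's formula, a supermartingale.

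With the supermartingale in hand I would invoke the supermartingale convergence theorem to obtain almost sure convergence of $V(\mathbf{s}(t))$, and then argue that $\mathbf{s}(t)$ converges almost surely to a vertex: the vertices are the only stationary points, the preceding theorem guarantees the process reaches every neighborhood of the vertex set, and the strong Markov property upgrades this to a.s. convergence to a single vertex, so that $s_k(\infty)\in\{0,1\}$. To extract the bound I would pass from $V$ to the bounded $[0,1]$-valued supermartingale $1-y_k$, obtained by applying the concave increasing map $\phi(x)=x/(1+x)$ to $V=(1-y_k)/y_k$ (concavity and monotonicity preserve the supermartingale property). This gives $\mathbb{E}_{\tilde{\mathbf{y}}}[\,1-s_k(t)\,]\le 1-\tilde{y}_k$; letting $t\to\infty$ with dominated convergence and the dichotomy $s_k(\infty)\in\{0,1\}$ yields $P_{\tilde{\mathbf{y}}}(s_k(\infty)=0)\le 1-\tilde{y}_k$, i.e. $P_{\tilde{\mathbf{y}}}(\mathbf{s}(t)\to\mathbf{e}_k)\ge\tilde{y}_k$.

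The main obstacle is the jump functional. Expanding $V(\mathbf{y}+D^3(\mathbf{y}))-V(\mathbf{y})$ exactly, recombining it with the compensator in $\tilde{D}^1$ to isolate the genuinely quadratic part, and then bounding it uniformly over $\mathbf{y}\in\Delta_n$ so that the integral reduces to exactly $I_1^k$ and $I_2^k$ is delicate, because the post-jump point depends on $\mathbf{y}$ through the nonlinear normalization $1+\mathbf{y}^T\mathbf{h}(x)$ and the envelopes $\psi^k$ must be inserted without reversing the sign; the nonnegativity assumption $h_i(x)\ge 0$ is what keeps these estimates one-sided. A secondary difficulty is justifying the almost sure convergence to a single vertex despite the right-continuity of the paths, where the continuous-path identification arguments of Imhof do not transfer verbatim.
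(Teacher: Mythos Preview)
Your overall Lyapunov-supermartingale strategy is the right one, but the specific choice $V(\mathbf{y})=(1-y_k)/y_k$ is where the plan derails. The paper takes the \emph{linear} function $g(\mathbf{y})=1-y_k$, whose Hessian vanishes, so the diffusion contribution $\frac{1}{2}\sum_{j,l}\gamma_{jl}\partial_{jl}g$ is identically zero. Your $V$ reinstates a strictly positive term $y_k^{-3}\gamma_{kk}(\mathbf{y})=y_k^{-1}\big[(1-y_k)^2\sigma_k^2+\sum_{l\neq k}y_l^2\sigma_l^2\big]$ that blows up as $y_k\to 0$; nothing in the hypotheses $\alpha+2\beta-I_1^k\ge 0$ and $-2\beta+I_2^k\ge 0$ is designed to absorb it, because those two inequalities were reverse-engineered from the factorization one gets with the linear $g$, namely $\mathcal{A}_J g(\mathbf{y})\le -y_k\big[(\alpha+2\beta-I_1^k)y_k-(2\beta-I_2^k)\big](1-y_k)$. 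Your detour of first proving $V$ is a supermartingale and then composing with $\phi(x)=x/(1+x)$ to obtain $1-y_k$ is therefore circular: the step you cannot close is exactly the one the linear choice avoids.

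The jump bookkeeping also does not ``collapse onto $I_1^k$ and $I_2^k$'' with your $V$. Because $V$ is a sum of ratios $y_l/y_k$, the post-jump value is $\sum_{l\neq k}(y_l/y_k)(1+h_l)/(1+h_k)$, so the nonlinear normalization $1+\mathbf{y}^T\mathbf{h}(x)$ cancels entirely and the combined jump--compensator term becomes $\sum_{l\neq k}(y_l/y_k)\int h_k(h_k-h_l)/(1+h_k)\,\nu(dx)$. By contrast, $I_1^k$ and $I_2^k$ carry $1+\psi_{\max}(x)$ in the denominator precisely because with $g=1-y_k$ the factor $1+\mathbf{y}^T\mathbf{h}(x)$ survives and must be bounded below; the envelopes $\psi_{\min}^k,\psi_{\max}^k$ enter by replacing $\sum_{j\neq k}y_jh_j(x)$ in the numerator and then expanding the square. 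The structure you obtain is genuinely different and does not match the stated constants.

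Finally, the endgame needs sharpening. The paper does not first establish almost sure convergence to some vertex; Theorem~4.1 only gives $\sup_t\max_j s_j(t)=1$, which is a recurrence-near-vertices statement, not convergence. Instead, once $\mathcal{A}_J g\le \hat g\le 0$ on all of $\Delta_n$, one localizes on $V_\delta=\{y_k>\delta\}$, applies the supermartingale maximal inequality to $g(\mathbf{s}(t\wedge\tau_{V_\delta}))$ to bound the exit probability by $(1-\tilde y_k)/(1-\delta)$, observes $\hat g\le -d<0$ on $V_\delta\setminus U_\epsilon(\mathbf{e}_k)$ to force attraction to $\mathbf{e}_k$ on the event of non-exit (Kushner's stability theorem), and then lets $\delta\downarrow 0$. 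Replacing $V$ by $g=1-y_k$ from the outset makes all three pieces---no diffusion term, the exact appearance of $I_1^k,I_2^k$, and the localized maximal-inequality endgame---line up with the hypotheses as stated.
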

\begin{proof}
Take $\tilde{A}$ defined above. Applying the infinitesimal generator $\mathcal{A}_J$ to our Lyapunov function $g( \mathbf{y}  )=1-y_k$, we have
\begin{equation*}\begin{split}
\mathcal{A}_J g( \mathbf{y}  ) & = -y_k( \mathbf{e}_k - \mathbf{y} )^T \tilde{A} \mathbf{y}
\\ & + -y_k \int_{  \mathbb{R} } \left( \frac{1+h_k(x)}{1+ \sum_{j}y_j h_j(x) } + \sum_{j}y_j h_j(x) - h_k(x) -1  \right)\nu(dx)
\\ & +  \int_{  \mathbb{R} } \Bigg( 1- \bigg( \frac{y_k \big( 1+h_k(x) \big) }{1+ \sum_{j}y_j h_j(x) } -y_k +y_k \bigg) + 1-y_k  \Bigg)\nu(dx)
\\ & = -y_k( \mathbf{e}_k - \mathbf{y} )^T \tilde{A} \mathbf{y}
\\ & + -y_k \int_{  \mathbb{R} } \left( \frac{ h_k(x) - \sum_{j}y_j h_j(x)  +  \left( \sum_{j}y_j h_j(x) \right)^2 - h_k(x)\sum_{j}y_j h_j(x)    }{  1+ \sum_{j}y_j h_j(x)   }  \right)\nu(dx)
\end{split}\end{equation*}
We first consider the term $\displaystyle -y_k( \mathbf{e}_k - \mathbf{y} )^T \tilde{A} \mathbf{y}$. We see that
\begin{equation}\begin{split} \label{ne1} 
& -y_k( \mathbf{e}_k - \mathbf{y} )^T \tilde{A} \mathbf{y} =  y_k \sum_{  \substack{ j \neq k \\  i \neq k }  } y_i \tilde{a}_{ij} y_j - y_k \big( 1-y_k  \big) \sum_{i \neq k} \tilde{a}_{ki} y_j + y^2_k \bigg( -(1-y_k) \tilde{a}_{kk} + \sum_{i \neq k} \tilde{a}_{jk} y_j \bigg) \\
& \leq  y_k \beta \sum_{  \substack{ j \neq k \\  i \neq k }  } y_i  y_j + y_k \beta \big( 1-y_k  \big) \sum_{i \neq k} y_j + y^2_k \bigg( -(1-y_k) \tilde{a}_{kk} + \big( \tilde{a}_{kk} -\alpha\big) \sum_{i \neq k}  y_j \bigg)   \\
& \leq - y_k \Big[ (\alpha + 2\beta)y_k - 2\beta \Big] g( \mathbf{y} ).
\end{split}\end{equation}

\smallskip
Now we shift our attention to the integral term. For $\mathbf{y} \in \Delta_{n}$, we have $\displaystyle 1+ \sum_{j}y_j h_j(x) \geq 1 + \min_{j }h_j(x) > 0$ by Assumption 2.1. Hence, we may focus on the numerator of the integrand to find an inequality.  Using the functions $\psi^{ k }_{\cdot}$ defined in the beginning of the section, we determine that

\begin{equation*}\begin{split}
& \int_{  \mathbb{R} } \Bigg[ h_k(x) - \sum_{j}y_j h_j(x)  +  \left( \sum_{j}y_j h_j(x) \right)^2 - h_k(x)\sum_{j}y_j h_j(x)  \Bigg]\nu(dx) 
\\ & = \int_{  \mathbb{R} } \Bigg[ h_k(x) -y_k h_k(x) - \sum_{j \neq k }y_j h_j(x) + \left( y_k h_k(x) + \sum_{j\neq k}y_j h_j(x) \right)^2 - y_k h_k(x)^2 - h_k(x)\sum_{j \neq k}y_j h_j(x) \Bigg]\nu(dx)
\\ & \geq \int_{  \mathbb{R} } \Bigg[ h_k(x) -y_k h_k(x) - \psi_{\max}^k(x) \sum_{j \neq k }y_j  +  \left( y_k h_k(x) + \psi_{\min}^k(x) \sum_{j\neq k}y_j  \right)^2 - y_k h_k(x)^2 - h_k(x) \psi_{\max}^k(x) \sum_{j \neq k}y_j \Bigg]\nu(dx)
\\ & = \int_{  \mathbb{R} } \Bigg[ h_k(x) -y_k h_k(x) - \psi_{\max}^k(x) (1 - y_k) +  \Big( y_k h_k(x) + \psi_{\min}^k(x) ( 1-y_k )  \Big)^2 - y_k h_k(x)^2 - h_k(x) \psi_{\max}^k(x) (1-y_k) \Bigg]\nu(dx)
\\ & = \int_{  \mathbb{R} } \Bigg[  h_k(x) -y_k h_k(x) - \psi_{\max}^k(x) (1 - y_k) + y_k^2 h_k(x)^2 +2 y_k h_k(x) \psi_{\min}^k(x) ( 1-y_k )  + \psi_{\min}^k(x)^2 ( 1-y_k )^2
\\ & \hspace{30pt} - y_k h_k(x)^2 - h_k(x) \psi_{\max}^k(x) (1-y_k) \Bigg]\nu(dx)
\end{split}\end{equation*}
\begin{equation}\begin{split}\label{n2}
\\ & = \int_{ \mathbb{R} } \Bigg[  h_k - \psi_{\max}^k(x) + -y_k h_k(x)^2 +2 y_k h_k(x)  \psi_{\min}^k(x)  + \psi_{\min}^k(x)^2 ( 1-y_k )  - h_k(x) \psi_{\max}^k(x) \Bigg]\nu(dx) \cdot (1-y_k)
\\ & = \int_{  \mathbb{R} } \Bigg[   -\Big( h_k(x)^2 -2 h_k(x) \psi_{\min}^k(x) + \psi_{\min}^k(x)^2 \Big) \cdot  y_k + \psi_{\min}^k(x)^2 + h_k(x)  -\big(1 + h_k(x) \big) \psi_{\max}^k(x) \Bigg]\nu(dx) \cdot (1-y_k)
\\ & = \Bigg[  -  \int_{  \mathbb{R} } \Big( h_k(x)- \psi_{\min}^k(x) \Big)^2 \nu(dx) \cdot  y_k +  \int_{  \mathbb{R} } \Big( \psi_{\min}^k(x)^2 + h_k(x)  -\big(1 + h_k(x) \big) \psi_{\max}^k(x) \Big) \nu(dx) \Bigg]  g( \mathbf{y} ) .
\end{split}\end{equation}
Equation \eqref{n2} tells us that
\begin{equation}  \label{ne2} 
 -y_k \int_{  \mathbb{R} } \left( \frac{ \left( \sum_{j}y_j h_j(x) \right)^2 - h_k(x)\sum_{j}y_j h_j(x) }{1+ \sum_{j}y_j h_j(x) }  \right)\nu(dx)  \leq - y_k \Big[ - I_1^k y_k + I_2^k \Big] g( \mathbf{y} ).
\end{equation}

\smallskip
Equations \eqref{ne1} and  \eqref{ne2} yield the inequality
$$
\mathcal{A}_J g( \mathbf{y}  ) \leq  - y_k \Big[ \big( \alpha + 2\beta - I_1^k \big)y_k - \big(2\beta - I_2^k \big) \Big] g( \mathbf{y} ):=\hat{g}(\mathbf{y} ).
$$
With our assumptions we have $\hat{g}(\mathbf{y} ) \leq 0$ for every $\mathbf{y} \in \Delta_n$. For an arbitrary $\delta >0$, define $\displaystyle V_{ \delta} = \Big\{ \mathbf{y}\in\Delta_n : y_k > \delta \Big\}$, and $\tau_{ V_{\delta}  }$ as the first time the process leaves $V_{\delta}$. Then $g\Big( \mathbf{s}\big( t \wedge \tau_{ V_{\delta} }  \big)\Big)$ is a local supermartingale, and thus for $\tilde{ \mathbf{y} } \in V_{\delta}$, 
$$
\displaystyle P_{ \tilde{ \mathbf{y} } } \bigg( \sup_{ 0 \leq t <\infty } g\Big( \mathbf{s}\big( t \wedge \tau_{ V_{\delta} } \big) \Big) \geq 1- \delta \bigg) \leq \frac { g\big(  \tilde{ \mathbf{y} } \big) }{ 1- \delta } 
$$  
which implies
$$
\displaystyle P_{ \tilde{ \mathbf{y} } } \bigg( \sup_{ 0 \leq t <\infty } g\Big( \mathbf{s}\big( t \wedge \tau_{ V_{\delta} } \big) \Big) < 1-  \delta \bigg) \geq 1- \frac { g\big(  \tilde{ \mathbf{y} } \big) }{ 1-\delta }.
$$  
Notice that for $\epsilon>0$, there is a $d>0$ such that $\hat{g}( \mathbf{y} ) \leq - d$ on $V_{\delta} \backslash U_{\epsilon}( \mathbf{e}_k)$. Therefore, applying the logic given in the proof of Theorem 2 on page 39 in Kushner \cite{HK67}, we conclude that $\displaystyle P_{ \tilde{ \mathbf{y} } }\left( \lim_{t\to\infty} \mathbf{s}(t)= \mathbf{e}_k \right) \geq 1-\frac{ 1 - \tilde{y}_{k} }{ 1-\delta }$. Since $\delta$ was arbitrary, letting $\delta \to 0$ yields $\displaystyle P_{ \tilde{ \mathbf{y} } }\left( \lim_{t\to\infty} \mathbf{s}(t)= \mathbf{e}_k \right) \geq  \tilde{y}_{k}$.
\end{proof}

\bigskip
To display the stoutness of the theorem, suppose that $0 \leq \psi_{\max}^{k}(x), \psi_{\min}(x)^{k} < 1$ and $h_k(x)$ is sufficiently larger than $\psi_{max}^{k}(x)$. Then we have $I^k_2 \geq 0$. If $2\beta$ is small enough, $-2\beta + I^2_k \geq 0$, which tells us that the jump function of the $k^{th}$ subpopulation fortify the stochastic stability of the Nash equilibrium $S_k$. Furthermore, if $I^k_1 < I^k_2$, it is very likely that $2\beta - I_1^k \geq 0$, and so $\alpha + 2\beta - I_1^k \geq 0$. However, if $2\beta - I_1^k < 0$, then $\alpha$ could be large enough so that $\alpha + 2\beta - I_1^k \geq 0$. In this situation, the strict Nash Equilibrium is strong enough to dictate the trajectories of the stochastic differential equation. A case like this is possible with $\sigma^2_k$ small and $\displaystyle \min \big\{|a_{kk} -a_{jk}|  : 1 \leq j \leq n \ \textnormal{and} \ j\neq k \big\}$ relatively large. However, it is more likely that $-2\beta + I^k_2 < 0$. The corollary below considers this case.

\bigskip
\begin{cor}
Assume that for the pure strategy $S_k$ and the corresponding variance $\sigma_k^2$, we have the inequality $a_{kk} > a_{jk} + \sigma_k^2$ for all $j \neq k$ , $h_i(x)$ is nonnegative for all $i$, $-2\beta + I^k_2 < 0$, and for $\alpha$ defined in the theorem above,  $\alpha + 2\beta - I_1^k \geq 0$, and $I_1^k \leq I_2^k + \alpha/2$. Then for every $\epsilon>0$, there exists a neighborhood of $ \mathbf{e}_k$, say $U \subset \Delta_n$, such that for any $\tilde{ \mathbf{y} } \in U$,
\begin{equation*}
P_{ \tilde{ \mathbf{y} } }\left( \lim_{t\to\infty} \mathbf{s}(t)= \mathbf{e}_k \right) \geq 1 - \epsilon.
\end{equation*}
\end{cor}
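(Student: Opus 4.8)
The plan is to reuse the Lyapunov estimate obtained in the proof of the preceding theorem and then to localize it. With $g(\mathbf{y})=1-y_k$ we already have the bound $\mathcal{A}_J g(\mathbf{y}) \leq \hat{g}(\mathbf{y}) = -y_k\big[(\alpha+2\beta-I_1^k)y_k-(2\beta-I_2^k)\big]g(\mathbf{y})$ valid for every $\mathbf{y}\in\Delta_n$. Under the new hypothesis $-2\beta+I_2^k<0$ the constant $2\beta-I_2^k$ is strictly positive, so the bracket is no longer nonnegative on all of $\Delta_n$; it changes sign at the threshold $\displaystyle y_k^{\ast}:=\frac{2\beta-I_2^k}{\alpha+2\beta-I_1^k}$. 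First I would record that $\alpha+2\beta-I_1^k>0$: if it vanished, then $I_1^k=\alpha+2\beta$ together with $I_1^k\leq I_2^k+\alpha/2$ would force $I_2^k\geq 2\beta+\alpha/2>2\beta$, contradicting $I_2^k<2\beta$. Consequently the sign of $\hat{g}$ is governed by whether $y_k\gtrless y_k^{\ast}$, and for $y_k^{\ast}<y_k<1$ we have $\hat{g}(\mathbf{y})<0$.

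The role of the extra assumption $I_1^k\leq I_2^k+\alpha/2$ is to guarantee that this threshold lies strictly below one, so that $\mathbf{e}_k$ possesses a genuine supermartingale neighborhood: since the denominator exceeds the numerator by $\alpha-(I_1^k-I_2^k)\geq\alpha/2>0$, one gets $y_k^{\ast}<1$ bounded away from $1$. I would therefore fix any $c\in(y_k^{\ast},1)$ and set $V:=\{\mathbf{y}\in\Delta_n:y_k>c\}$, a neighborhood of $\mathbf{e}_k$ on which $\hat{g}\leq 0$, hence $\mathcal{A}_J g\leq 0$.

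Writing $\tau_V$ for the first exit time from $V$, Dynkin's formula applied exactly as in the previous proof shows that $g\big(\mathbf{s}(t\wedge\tau_V)\big)$ is a nonnegative bounded (hence genuine) supermartingale; the jump part of $\mathcal{A}_J$ is already absorbed into the bound, and a jump that leaves $V$ is recorded at $\tau_V$, where $g\geq 1-c$. The supermartingale maximal inequality then gives $\displaystyle P_{\tilde{\mathbf{y}}}\big(\tau_V<\infty\big)\leq P_{\tilde{\mathbf{y}}}\big(\sup_{t\geq 0} g(\mathbf{s}(t\wedge\tau_V))\geq 1-c\big)\leq \frac{g(\tilde{\mathbf{y}})}{1-c}=\frac{1-\tilde{y}_k}{1-c}$ for every $\tilde{\mathbf{y}}\in V$. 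On the complementary event the process never reaches the face $\{y_k=c\}$; because $\hat{g}\leq -d<0$ on the compact set $V\setminus U_{\epsilon}(\mathbf{e}_k)$ (there $1-y_k\geq\epsilon/\sqrt{2}$ together with $y_k>c>y_k^{\ast}$ forces a uniformly negative drift), the Kushner argument cited in the theorem yields $\mathbf{s}(t)\to\mathbf{e}_k$ on that event. Hence $\displaystyle P_{\tilde{\mathbf{y}}}\big(\lim_{t\to\infty}\mathbf{s}(t)=\mathbf{e}_k\big)\geq 1-\frac{1-\tilde{y}_k}{1-c}$ for all $\tilde{\mathbf{y}}\in V$.

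Finally, with $c$ fixed, given $\epsilon>0$ I would take $U:=\{\mathbf{y}\in\Delta_n:y_k>1-\epsilon(1-c)\}\subset V$; for $\tilde{\mathbf{y}}\in U$ one has $(1-\tilde{y}_k)/(1-c)<\epsilon$, whence $P_{\tilde{\mathbf{y}}}(\mathbf{s}(t)\to\mathbf{e}_k)\geq 1-\epsilon$, as required. The main obstacle, compared with the theorem, is precisely that the global supermartingale property is lost: one can no longer drive the radius of the region to zero to obtain a bound in terms of $\tilde{y}_k$ alone, so the argument must be confined to the fixed neighborhood $V$ on which $\hat{g}\leq 0$ and the starting set $U$ must be shrunk instead. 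Verifying $y_k^{\ast}<1$ from the hypotheses, and checking that the jump overshoot at $\tau_V$ breaks neither the supermartingale inequality nor the identification of the exit event, are the two points that genuinely require care.
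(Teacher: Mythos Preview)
Your argument is correct and follows essentially the same route as the paper: reuse the Lyapunov bound $\mathcal{A}_J g(\mathbf{y})\le -y_k\big[(\alpha+2\beta-I_1^k)y_k-(2\beta-I_2^k)\big]g(\mathbf{y})$ from the theorem, restrict to a neighborhood $\{y_k>\text{threshold}\}$ of $\mathbf{e}_k$ on which the bracket is positive, and conclude asymptotic stability there. The only cosmetic difference is that the paper chooses its threshold so as to obtain the sharper inequality $\mathcal{A}_J g\le -c\,g$ on the neighborhood (for an explicit constant $c>0$) and then invokes a ready-made stochastic stability theorem from Gihman--Skorohod, whereas you take $\hat g\le 0$ on $V$ and rerun the supermartingale/Kushner argument of the preceding theorem by hand; both reach the same conclusion.
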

\begin{proof}
From Theorem 5.1, we have the inequality $\mathcal{A}_J g( \mathbf{y}  ) \leq  - y_k \Big[ \big( \alpha + 2\beta - I_1^k \big)y_k - \big(2\beta - I_2^k \big) \Big] g( \mathbf{y} )$.
Define $\displaystyle U_0= \bigg\{ \mathbf{y}\in\Delta_n : y_k > \frac{1}{2}\frac{ \alpha + 4\beta - 2I_2^k }{ \alpha + 2\beta - I_1^k} \bigg\}$. Note that for $\displaystyle c:=\frac{ \alpha }{4}\Big(  1-2\frac{ I^k_2 }{\alpha + 4\beta}  \Big)$, $c>0$ and $- y_k \Big[ \big( \alpha + 2\beta - I_1^k \big)y_k - \big(2\beta - I_2^k \big) \Big] g( \mathbf{y} ) \leq -c g \big( \mathbf{y} \big)$ for $\mathbf{y} \in U_0$. Therefore, by Theorem 4 and Remark 2 in \cite{GS72} (page 325), we are able to conclude our proof.
\end{proof}

\bigskip
Inspecting the proof of Theorem 5.1, we see that we are able to further expand this result. In particular, we can create an algorithm for the terms of the form $\displaystyle \sum_{j\neq k}y_j h_j(x)$, and be more liberal with our inequalities. Unfortunately, this will not tell us much about the dynamic of the process. 

\smallskip 
For the nonpositive case, define the integrals 
$$
\displaystyle J_1^k= \int_{  \mathbb{R} } \frac{ \big( h_k(x) - \psi_{\max}^k(x) \big)^2 }{1+ \psi_{\max}(x) } \nu(dx)
$$ 
and 
$$
\displaystyle J_2^k = \int_{  \mathbb{R}  } \frac{  h_k(x) \big( 1- \psi_{ \min}^k(x) \big) -\psi_{\max}^k(x) \big( 1- \psi_{\max}^k(x) \big) }{1+ \psi_{\max}(x) } \nu(dx).
$$ 

\smallskip
Notice that $I^k_1$ and$J^k_1$ differ by the terms $\big( h_k(x) - \psi_{\min}^k(x) \big)^2$ and $\big( h_k(x) - \psi_{\max}^k(x) \big)^2$, respectively. Since $I^k_1$ is derived for the nonnegative case, and hence none of the subpopulations have a detrimental impact, all of the subpopulations have to be considered. However, $J_1^k$ is derived for the nonpositive case, and since every subpopulation is negatively affected, consideration only needs to be given to the subpopulations that are least affected by the anomaly.

\smallskip
To simplify the statement of the following corollaries and theorems, we define the following assumptions. The assumptions for the nonnegative case correspond to the previous theorem and corollary.

\bigskip
\begin{ass}
Take $\beta$ defined in the beginning of the section. Then:

\begin{enumerate}[(a)]
\item $h_i(x)$ is nonnegative for all $i$, $-2\beta + I^k_2 \geq 0$, there is an $\alpha>0$ where $\alpha + a_{jk} < a_{kk} -\sigma_k^2$ for all $j$, and $\alpha + 2\beta - I_1^k \geq 0$;
\item $h_i(x)$ is nonnegative for all $i$, $-2\beta + I^k_2 < 0$, there is an $\alpha>0$ where $\alpha + a_{jk} < a_{kk} -\sigma_k^2$ for all $j$, $\alpha + 2\beta - I_1^k > 0$, and $I_1^k \leq I_2^k + \alpha/2$;
\item $h_i(x)$ is nonpositive for all $i$, $-2\beta + J^k_2 \geq 0$, there is an $\alpha>0$ where $\alpha + a_{jk} < a_{kk} -\sigma_k^2$ for all $j$, and $\alpha + 2\beta - J_1^k \geq 0$;
\item $h_i(x)$ is nonpositive for all $i$, $-2\beta + J^k_2 < 0$, there is an $\alpha>0$ where $\alpha + a_{jk} < a_{kk} -\sigma_k^2$ for all $j$, $\alpha + 2\beta - J_1^k > 0$, and $J_1^k \leq J_2^k + \alpha/2$.
\end{enumerate}

\end{ass}

\bigskip
The next two corollaries correspond to the nonpositive jump function case. Since the derivation follows what was done in the nonnegative case, for brevity, we do not show these proofs.

\bigskip
\begin{cor}
Suppose that Assumption 5.1(c) holds.  Then for the initial condition $\displaystyle \tilde{ \mathbf{y} } \in \Delta_n$, we have
\begin{equation}
P_{ \tilde{ \mathbf{y} } }\left( \lim_{t\to\infty} \mathbf{s}(t)= \mathbf{e}_k \right) \geq  \tilde{y}_{k} .
\end{equation}
\end{cor}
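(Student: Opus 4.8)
The plan is to reprise the proof of Theorem 5.1 essentially unchanged, reversing the direction of a single family of estimates to account for the opposite sign of the jump functions. As before I would take the Lyapunov function $g(\mathbf{y}) = 1 - y_k$ and compute $\mathcal{A}_J g(\mathbf{y})$, which after the same simplification of the two integral contributions carried out in Theorem 5.1 separates into the drift-diffusion contribution $-y_k(\mathbf{e}_k - \mathbf{y})^T \tilde{A}\mathbf{y}$ and the jump integral $-y_k\int_{\mathbb{R}} \frac{h_k(x) - \sum_j y_j h_j(x) + \left(\sum_j y_j h_j(x)\right)^2 - h_k(x)\sum_j y_j h_j(x)}{1 + \sum_j y_j h_j(x)}\,\nu(dx)$. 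The first term is bounded exactly as in Equation \eqref{ne1}, yielding $-y_k\big[(\alpha + 2\beta)y_k - 2\beta\big]g(\mathbf{y})$; that estimate uses only the uniform bound $\beta$ on the entries of $\tilde{A}$ together with the choice of $\alpha$ satisfying $\alpha + a_{jk} < a_{kk} - \sigma_k^2$ (guaranteed by Assumption 5.1(c)), and is completely insensitive to the sign of the $h_j$.

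The one place where the sign enters is the numerator estimate corresponding to Equation \eqref{n2}. Writing $\sum_j y_j h_j(x) = y_k h_k(x) + \sum_{j\neq k} y_j h_j(x)$, I would replace the off-diagonal sum by the extreme $\psi^k$-value that makes each one-sided bound valid. Two of these monotonicities reverse when every $h_j(x) \leq 0$: the squared term $\left(y_k h_k(x) + \sum_{j\neq k} y_j h_j(x)\right)^2$ now has a nonpositive base, so it is minimized by the value of the inner sum nearest zero, namely $\psi_{\max}^k(x)(1 - y_k)$ rather than $\psi_{\min}^k(x)(1-y_k)$; and in the cross term $-h_k(x)\sum_{j\neq k} y_j h_j(x)$ the factor $-h_k(x)$ is now nonnegative, so the correct lower bound uses $\psi_{\min}^k(x)(1-y_k)$ in place of $\psi_{\max}^k(x)(1-y_k)$. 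The remaining linear term $-\sum_{j\neq k} y_j h_j(x)$ is still bounded below by $-\psi_{\max}^k(x)(1-y_k)$. Expanding and collecting the coefficient of $y_k$ and the constant term reproduces $-(h_k(x) - \psi_{\max}^k(x))^2$ and $h_k(x)(1 - \psi_{\min}^k(x)) - \psi_{\max}^k(x)(1 - \psi_{\max}^k(x))$, which are precisely the numerators defining $J_1^k$ and $J_2^k$; this is the algebraic content of the remark that for detrimental anomalies only the least-affected competitors need be tracked. Bounding the denominator below by $1 + \psi_{\max}(x)$ via $\sum_j y_j h_j(x) \leq \psi_{\max}(x)$ then gives the analogue of Equation \eqref{ne2}, namely $-y_k\big[-J_1^k y_k + J_2^k\big]g(\mathbf{y})$.

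Adding the two estimates yields $\mathcal{A}_J g(\mathbf{y}) \leq -y_k\big[(\alpha + 2\beta - J_1^k)y_k - (2\beta - J_2^k)\big]g(\mathbf{y}) =: \hat{g}(\mathbf{y})$, and the hypotheses $-2\beta + J_2^k \geq 0$ and $\alpha + 2\beta - J_1^k \geq 0$ of Assumption 5.1(c) force $\hat{g}(\mathbf{y}) \leq 0$ on all of $\Delta_n$. From this point the argument is word-for-word that of Theorem 5.1: on $V_\delta = \{\mathbf{y} \in \Delta_n : y_k > \delta\}$ the stopped process $g\big(\mathbf{s}(t \wedge \tau_{V_\delta})\big)$ is a nonnegative local supermartingale, the maximal inequality gives $P_{\tilde{\mathbf{y}}}\big(\sup_t g(\mathbf{s}(t \wedge \tau_{V_\delta})) < 1 - \delta\big) \geq 1 - g(\tilde{\mathbf{y}})/(1-\delta)$, and since $\hat{g}$ is bounded away from zero on $V_\delta \setminus U_\epsilon(\mathbf{e}_k)$ the argument of Theorem 2 on page 39 of Kushner \cite{HK67} forces convergence to $\mathbf{e}_k$ with probability at least $1 - (1 - \tilde{y}_k)/(1 - \delta)$. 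Letting $\delta \to 0$ gives the claimed lower bound $\tilde{y}_k$.

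The main obstacle is the bookkeeping in the numerator estimate: each replacement of $\sum_{j\neq k} y_j h_j(x)$ by a $\psi^k$-term must be an inequality in the correct direction, and when the numerator lower bound is combined with the enlargement of the denominator to $1 + \psi_{\max}(x)$ one must check that the composite inequality still points the right way, i.e. that the partially estimated numerator carries the sign needed for the denominator replacement to preserve rather than reverse the bound. This sign-consistency is the only genuine use of the hypothesis $h_i(x) \leq 0$, and it is exactly the step where a mechanical copy of the nonnegative proof would go wrong.
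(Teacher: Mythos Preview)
Your proposal is correct and follows exactly the route the paper indicates: the paper states that the nonpositive corollaries are proved by repeating the argument of Theorem~5.1, and you have carried this out, correctly identifying which of the $\psi^k$-replacements must be reversed so that the numerator lower bound produces precisely the integrands defining $J_1^k$ and $J_2^k$. Your algebra matches the paper's definitions, and the remainder of the argument (the supermartingale step and the appeal to Kushner) is verbatim from Theorem~5.1, as the paper intends.
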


\bigskip
\begin{cor}
If Assumption 5.1(d) holds, then for every $\epsilon>0$, there exists a neighborhood of $ \mathbf{e}_k$, say $V \subset \Delta_n$, such that for any $\tilde{ \mathbf{y} } \in V$,
\begin{equation*}
P_{ \tilde{ \mathbf{y} } }\left( \lim_{t\to\infty} \mathbf{s}(t)= \mathbf{e}_k \right) \geq 1 -\epsilon.
\end{equation*}
\end{cor}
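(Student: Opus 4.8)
The plan is to follow the proof of Corollary 5.1 line for line, replacing the nonnegative constants $I_1^k, I_2^k$ by their nonpositive counterparts $J_1^k, J_2^k$. The whole argument rests on a single generator estimate, established exactly as in the (omitted) proof of Corollary 5.2, which in turn mirrors Theorem 5.1: for the Lyapunov function $g(\mathbf{y})=1-y_k$ one has
$$\mathcal{A}_J g(\mathbf{y}) \leq -y_k\Big[(\alpha + 2\beta - J_1^k)\,y_k - (2\beta - J_2^k)\Big] g(\mathbf{y}) =: \hat{g}(\mathbf{y}).$$
The drift contribution $-y_k(\mathbf{e}_k - \mathbf{y})^T \tilde{A}\mathbf{y}$ is bounded verbatim as in \eqref{ne1}, since that step is blind to the jump functions; the only genuinely new work is the integral estimate. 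Because every $h_j$ is now nonpositive, the inequalities that in \eqref{n2} pass from $\sum_{j\neq k} y_j h_j(x)$ to the envelopes $\psi^k_{\min}, \psi^k_{\max}$ reverse direction: the square is minimized using $\psi^k_{\max}$ rather than $\psi^k_{\min}$, and the remaining linear terms are bounded to match, which is precisely what replaces $I_1^k, I_2^k$ by $J_1^k, J_2^k$. This sign-reversed integral bound is the step I expect to demand the most care, and it is exactly the content of the omitted proof of Corollary 5.2, so I would simply invoke that corollary's estimate.

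With the generator bound in hand, I would introduce the set
$$V_0 := \left\{\mathbf{y}\in\Delta_n : y_k > \tfrac{1}{2}\,\frac{\alpha + 4\beta - 2J_2^k}{\alpha + 2\beta - J_1^k}\right\}.$$
Writing $P := \alpha + 2\beta - J_1^k$ and $Q := 2\beta - J_2^k$, Assumption 5.1(d) gives $P>0$ and, since $J_2^k < 2\beta$, also $Q>0$; the threshold then equals $\tfrac{\alpha}{2P}+\tfrac{Q}{P}$, so on $V_0$ the bracket obeys $P y_k - Q > \alpha/2 > 0$. Hence $\hat{g}(\mathbf{y}) \leq -c\,g(\mathbf{y})$ on $V_0$ for the clean positive constant $c := \tfrac{\alpha}{4}\big(1 - 2J_2^k/(\alpha+4\beta)\big)$, whose positivity again follows from $J_2^k < 2\beta < 2\beta + \alpha/2$; because $J_1^k \geq 0$ this is a valid (slightly weaker) lower bound, exactly the nonpositive analogue of the constant in Corollary 5.1.

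It remains to verify that $V_0$ is a genuine neighborhood of $\mathbf{e}_k$, i.e. that its threshold does not exceed $1$ so that the vertex $\mathbf{e}_k$ (where $y_k=1$) sits in it. A one-line manipulation shows the threshold is at most $1$ precisely when $J_1^k \leq J_2^k + \alpha/2$, the last clause of Assumption 5.1(d), with strict interiority when that inequality is strict; this is where the hypothesis is consumed. Finally, since $g$ is nonnegative on $\Delta_n$, vanishes only at $\mathbf{e}_k$, and satisfies $\mathcal{A}_J g \leq -c\,g$ on $V_0$, the stopped process $g\big(\mathbf{s}(t\wedge\tau_{V_0})\big)$ is a supermartingale decaying geometrically while inside $V_0$; applying Theorem 4 together with Remark 2 in Gihman and Skorohod \cite{GS72} (page 325), exactly as in Corollary 5.1, yields convergence to $\mathbf{e}_k$ with probability at least $1-\epsilon$ for every starting point in a suitably small $V \subseteq V_0$, which is the asserted claim.
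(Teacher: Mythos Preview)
Your proposal is correct and follows exactly the approach the paper intends: the paper omits the proof entirely, stating only that ``the derivation follows what was done in the nonnegative case,'' and you have carried out precisely that translation, replacing $I_1^k,I_2^k$ by $J_1^k,J_2^k$, invoking the generator estimate from (the likewise omitted) Corollary 5.2, defining the analogue $V_0$ of $U_0$, and appealing to the same result in Gihman and Skorohod. Your explicit verification that the threshold $\tfrac12(\alpha+4\beta-2J_2^k)/(\alpha+2\beta-J_1^k)\le 1$ is equivalent to the hypothesis $J_1^k\le J_2^k+\alpha/2$ makes transparent where that assumption is actually used, something the paper leaves implicit even in the nonnegative case.
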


\bigskip
If for every pure strategy $S_k$ and the corresponding variance $\sigma_k^2$, we have the inequality $a_{kk} > a_{jk} + \sigma_j^2$ for all $j \neq k$, This is called a coordination game. The game is naturally named since if someone is playing $S_i$ then you must play $S_i$, even though $S_j$ might have a greater payoff. As such, from an evolutionary perspective, it is nontrivial to determine which subpopulations will eventually win, especially since initial condition determines much of the dynamic. The theorem below shows that if any of the conditions hold in Assumption 5.1 hold for all pure strategies in a coordination game, then the process will converge to some subpopulation with probability 1. What we are not currently able to do, given an initial condition, is determine the distribution of the sample paths converging to each subpopulation.This short coming also holds for the stochastic replicator dynamic. Since Assumption 5.1 (j), for $j=a,b,c,d$, says that all the jump functions are fairly closely to each other, one  would expect the Poisson perturbation to make a minimal impact on the dynamics. Since the ground work has been done by the previous theorems and corollaries in this section, the proof follows exactly the one given in Imhof \cite{I05}.

\bigskip
\begin{thm}
Take the matrix $A$ to be the payoff matrix to a coordination, and the process $\mathbf{s}(t)$ defined in Equation \eqref{ndim}. If for each pure strategy we have Assumption 5.1(a) or Assumption 5.1(b) holds, then
\begin{equation*}
P_{ \tilde{ \mathbf{y} } }\left( \lim_{t\to\infty} \mathbf{s}(t)= \mathbf{e}_k  \ \textnormal{for some} \ k \right) =1.
\end{equation*}
\end{thm}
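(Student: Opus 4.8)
The plan is to combine the recurrence estimate of Theorem 4.1 with the local stability estimates of Theorem 5.1 and Corollary 5.1 through a martingale (invariant-function) argument, which conveniently avoids having to track the process across repeated excursions by hand. Write $C := \{\lim_{t\to\infty}\mathbf{s}(t) = \mathbf{e}_k \ \text{for some} \ k\}$ and set $h(\mathbf{y}) := P_{\mathbf{y}}(C)$. Because $C$ is a shift-invariant (tail) event, the Markov property for the process of Equation \eqref{ndim} gives $P(C \mid \mathcal{F}_t) = h(\mathbf{s}(t))$, so $t \mapsto h(\mathbf{s}(t))$ is a bounded $[0,1]$-valued martingale; crucially this uses only the probabilistic description of $h$, so neither smoothness of $h$ nor a direct computation of $\mathcal{A}_J h$ is needed.

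The first substantive step is to produce a uniform lower bound for $h$ near the vertices. Fix $\eta > 0$. For each $k$, Assumption 5.1(a) or 5.1(b) holds. If (a) holds, Theorem 5.1 gives $P_{\mathbf{z}}(\lim \mathbf{s} = \mathbf{e}_k) \geq z_k$, which exceeds $1-\eta$ once $z_k \geq 1-\eta$; if (b) holds, Corollary 5.1 supplies a neighborhood of $\mathbf{e}_k$ on which $P_{\mathbf{z}}(\lim \mathbf{s} = \mathbf{e}_k) \geq 1-\eta$. In either case there is $\epsilon_k \in (0,1)$ with $P_{\mathbf{z}}(\lim \mathbf{s} = \mathbf{e}_k) \geq 1-\eta$ whenever $z_k \geq 1-\epsilon_k$, and hence $h(\mathbf{z}) \geq 1-\eta$ there. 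Taking $\epsilon := \min_k \epsilon_k$ and $W := \bigcup_k \{\mathbf{z}\in\Delta_n : z_k \geq 1-\epsilon\}$, I obtain $h \geq 1-\eta$ on $W$.

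Next I would invoke Theorem 4.1 with this $\epsilon$: the stopping time $\tau_\epsilon = \inf\{t : s_k(t) \geq 1-\epsilon \ \text{for some} \ k\}$ satisfies $\mathbb{E}_{\mathbf{y}}[\tau_\epsilon] < \infty$, so $\tau_\epsilon < \infty$ almost surely, and $\mathbf{s}(\tau_\epsilon) \in W$. Here the right-continuity and the presence of jumps are harmless: by Proposition \ref{alltime} the process never leaves $\Delta_n$, so even if $W$ is entered by a jump one still has $\mathbf{s}(\tau_\epsilon) \in W \subset \Delta_n$. Applying optional sampling to the bounded martingale $h(\mathbf{s}(t))$ at the bounded times $t \wedge \tau_\epsilon$ and letting $t\to\infty$ (using right-continuity together with bounded convergence) yields $h(\mathbf{y}) = \mathbb{E}_{\mathbf{y}}[h(\mathbf{s}(\tau_\epsilon))] \geq 1-\eta$. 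Since $\eta > 0$ was arbitrary, $h(\mathbf{y}) = 1$, which is the claim.

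I expect the main obstacle to be conceptual rather than computational: a single visit to $W$ need not result in convergence, since under (a) or (b) the process may still escape with probability up to $\eta$, and one must rule out the pathology of infinitely many escape-and-return excursions that never settle. The martingale/invariance step above dispatches this cleanly, but the equivalent elementary route --- defining successive entrance times into $W$, bounding the conditional escape probability by $\eta$ at each visit via the strong Markov property, and applying Borel--Cantelli --- requires care to ensure the per-visit bound is uniform over \emph{all} entry points in $W$ (which is precisely why the estimates must be phrased as holding from every $\mathbf{z}$ with $z_k \geq 1-\epsilon$, not merely at $\mathbf{e}_k$) and to match the neighborhood scales coming from (a) versus (b) with the single recurrence radius $\epsilon$ furnished by Theorem 4.1.
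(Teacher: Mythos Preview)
Your proposal is correct and follows essentially the same route as the paper: combine the finite expected hitting time of the vertex neighborhoods from Theorem~4.1 with the local stability bounds of Theorem~5.1/Corollary~5.1 via the strong Markov property, then let the tolerance go to zero. Your packaging differs cosmetically---you phrase the strong Markov step as optional sampling for the bounded martingale $h(\mathbf{s}(t))$ and you separate the tolerance $\eta$ from the recurrence radius $\epsilon$ (the paper conflates them), which is cleaner---but the substance is identical to the paper's two-line argument $P_{\tilde{\mathbf{y}}}(Q)=E_{\tilde{\mathbf{y}}}\big[E_{\mathbf{s}(\tau_\epsilon)}[\chi_Q]\big]\geq 1-\epsilon$.
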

\begin{proof}
For an arbitrary $\epsilon>0$ take $\tau_{\epsilon}$ defined in Theorem 4.1, $\mathbf{y} \in \Delta_n$, define $\displaystyle Q=\Big\{  \lim_{t \to \infty} s_k(t)  = \mathbf{e}_k  \ \textnormal{for some} \ k \Big\}$, and $\chi_{Q}$ as the indicator function over this set. Theorem 5.1 and Corollary 5.1 tells us for every small $\epsilon>0$, there exists a neighborhood of $\mathbf{e}_k$, say $U$, where if $\tilde{ \mathbf{y} } \in U$ then $\displaystyle P_{ \tilde{ \mathbf{y} } }\bigg( \lim_{t \to \infty} \mathbf{s}(t)= \mathbf{e}_{k} \ \textnormal{for some} \ k  \bigg)> 1 -\epsilon$. From Theorem 4.2, we have  $\displaystyle E_{ \tilde{ \mathbf{y} } }\Big[ \tau_{\epsilon} \Big]<\infty$, and so the strong Markov property tells us that
$$
P_{ \tilde{ \mathbf{y} } } \big( Q \big) = E_{ \tilde{ \mathbf{y}  } } \Big[ E_{ \mathbf{s}( \tau_{\epsilon} ) }  \big[  \chi_{Q}  \big] \Big] \geq 1 -\epsilon.
$$
Since $\epsilon$ was arbitrary, the theorem holds. 
\end{proof}

\bigskip
\begin{cor}
If for each pure strategy we have Assumption 5.1(c) or Assumption 5.1(d) holds, then
\begin{equation*}
P_{ \tilde{ \mathbf{y} } }\left( \lim_{t\to\infty} \mathbf{s}(t)= \mathbf{e}_k  \ \textnormal{for some} \ k \right) =1.
\end{equation*}
\end{cor}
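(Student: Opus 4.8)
The plan is to follow the proof of Theorem 5.2 essentially verbatim, substituting the nonpositive-case results for their nonnegative counterparts. The architecture is the same two-stage argument used there: first establish that the process almost surely enters a small neighborhood of some vertex after a stopping time of finite expectation, and then use the near-certain convergence from inside such a neighborhood, splicing the two stages together with the strong Markov property. Concretely, wherever the proof of Theorem 5.2 invoked Theorem 5.1 and Corollary 5.1 (the nonnegative results), I would instead invoke Corollary 5.2 (under Assumption 5.1(c)) and Corollary 5.3 (under Assumption 5.1(d)).

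First I would fix an arbitrary $\epsilon>0$ and, for each pure strategy $S_k$, produce a neighborhood $V_k$ of $\mathbf{e}_k$ in $\Delta_n$ on which $P_{\tilde{\mathbf{y}}}\big( \lim_{t\to\infty}\mathbf{s}(t)=\mathbf{e}_k \big) > 1-\epsilon$ whenever $\tilde{\mathbf{y}}\in V_k$. If Assumption 5.1(d) holds for $S_k$, then Corollary 5.3 supplies such a $V_k$ directly. If instead Assumption 5.1(c) holds, Corollary 5.2 yields only the weaker bound $P_{\tilde{\mathbf{y}}}\big( \lim_{t\to\infty}\mathbf{s}(t)=\mathbf{e}_k \big) \geq \tilde{y}_k$; but since $\tilde{y}_k\to 1$ as $\tilde{\mathbf{y}}\to\mathbf{e}_k$, restricting to $\big\{ \tilde{\mathbf{y}}\in\Delta_n : \tilde{y}_k > 1-\epsilon \big\}$ again produces a neighborhood on which the convergence probability exceeds $1-\epsilon$. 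This is precisely the step by which Theorem 5.2 upgraded the $\tilde{y}_k$-bound of Theorem 5.1 into a near-certainty statement close to the vertex.

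Next I would invoke Theorem 4.2 together with Remark 4.1: the stopping time $\tau_{\epsilon}=\inf\big\{ t>0 : s_k(t)\geq 1-\epsilon \ \textnormal{for some} \ k \big\}$ has $\mathbb{E}_{\tilde{\mathbf{y}}}[\tau_{\epsilon}]<\infty$, so $\tau_{\epsilon}<\infty$ almost surely. Choosing $\epsilon$ small enough that each set $\{\mathbf{y}\in\Delta_n : y_k\geq 1-\epsilon\}$ lies inside the corresponding $V_k$, the process is at time $\tau_{\epsilon}$ located in $\bigcup_k V_k$ almost surely. Writing $Q=\big\{ \lim_{t\to\infty}\mathbf{s}(t)=\mathbf{e}_k \ \textnormal{for some} \ k \big\}$ and $\chi_Q$ for its indicator, the strong Markov property applied at $\tau_{\epsilon}$ gives
$$
P_{\tilde{\mathbf{y}}}(Q) = \mathbb{E}_{\tilde{\mathbf{y}}}\big[ \mathbb{E}_{\mathbf{s}(\tau_{\epsilon})}[\chi_Q] \big] \geq 1-\epsilon,
$$
since on the event $\{\mathbf{s}(\tau_{\epsilon})\in V_k\}$ one has $\mathbb{E}_{\mathbf{s}(\tau_{\epsilon})}[\chi_Q] \geq P_{\mathbf{s}(\tau_{\epsilon})}\big( \lim_{t\to\infty}\mathbf{s}(t)=\mathbf{e}_k \big) > 1-\epsilon$. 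Letting $\epsilon\to 0$ then yields $P_{\tilde{\mathbf{y}}}(Q)=1$.

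I do not expect any genuine obstacle here, since all of the analytic labor — the Lyapunov supermartingale estimates, the Kushner-type localization, and the finite-moment control of $\tau_{\epsilon}$ — has already been carried out in Corollaries 5.2 and 5.3 and in Section 4. The only point demanding mild care is the bookkeeping in the first stage for the Assumption 5.1(c) case: one must verify that the $\tilde{y}_k$-lower bound of Corollary 5.2 genuinely yields a neighborhood of $\mathbf{e}_k$ on which convergence occurs with probability exceeding $1-\epsilon$, and simultaneously that the threshold defining $\tau_{\epsilon}$ is taken fine enough that $\mathbf{s}(\tau_{\epsilon})$ necessarily lands inside one of the $V_k$. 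This coordination of the single parameter $\epsilon$ across both roles is exactly the modest subtlety already resolved in the proof of Theorem 5.2, and it transfers without change to the nonpositive setting.
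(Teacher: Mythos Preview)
Your proposal is correct and is exactly the approach the paper intends: the paper gives no proof at all for this corollary, leaving it implicit that the argument of Theorem 5.2 carries over verbatim once Theorem 5.1 and Corollary 5.1 are replaced by their nonpositive counterparts, Corollaries 5.2 and 5.3. Your write-up in fact spells out more carefully than the paper does the coordination of $\epsilon$ between the neighborhood size and the threshold in $\tau_{\epsilon}$.
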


\section{Dominated Strategies}
 We say that a strategy $\mathbf{q}$ is dominated by $\mathbf{p}$ if for any strategy you play against your better payoff comes from employing strategy $\mathbf{p}$, i.e., $\mathbf{q}^T A \mathbf{p}' \leq \mathbf{p}^T A \mathbf{p}'$ for all $\mathbf{p}' \in \overline{ \Delta}_n$. Imhof \cite{I05} showed that under suitable conditions, pure strategies that are dominated eventually become extinct. We show under appropriate Gaussian and compensated Poissonian perturbations, a dominated pure strategy becomes extinct. The condition for extinction is more general than the condition derived by Imhof \cite{I05}. In particular, there is an extra term added to the inequality given in Theorem 3.1 (Imhof \cite{I05}). This new inequality creates more situations for the dominated subpopulation to become extinct. For example, if the inequality for extinction does not hold for the stochastic replicator dynamic, the inequality may hold when considering the affect of anomalies. Since the method employed in Theorem 3.1 \cite{I05} is quite natural, we consider an extension of this derivation. In Imhof's derivation, a term similar to Law of the Iterated Logarithm that was added to adjust for the Brownian term. However, in the derivation below, there is no need to include this term, displaying that it may be more natural to consider anomalies in replicator dynamics.

\bigskip
\bigskip
\begin{thm}
Let the pure strategy $S_k$ be dominated by the mixed strategy $\mathbf{p} \in \overline{ \Delta }_n$. For our payoff matrix $A$ define $\displaystyle K_1= \min_{ \mathbf{q} \in \overline{ \Delta }_n } \left\{ \mathbf{p}^T A \mathbf{q} - \mathbf{e}_k^T A \mathbf{q} \right\}$, $\displaystyle K_2= -\frac{ \sigma_k^2 }{2} + \frac{1}{2} \sum_{j} p_j \sigma_j^2 $, and define $\sigma_{ \max}= \max \{ \sigma_1, \ldots, \sigma_n \}$. Assume for all $x\in\mathbb{R}$ $\displaystyle \log\big(1+ h_k(x) \big) < \sum_{j} p_j \log\big( 1+ h_j(x) \big)$. If the inequality
$\displaystyle \int_{  \mathbb{R} } \Big( h_k(x) - \sum_{j} p_j h_j(x)   \Big)\nu(dx) + (K_1 - K_2)>0$ 
holds, then for every $\mathbf{y} \in  \Delta_n,$
$$
\displaystyle P_{ \mathbf{y} } \bigg( s_k(t) = o\bigg(  \ \exp\bigg\{ - t \bigg[ \int_{  \mathbb{R} } \Big( h_k(x) - \sum_{j} p_j h_j(x) \Big)\nu(dx) + \big( K_1 - K_2 \big) \bigg]  \bigg\} \  \ \bigg)  \bigg)= 1.
$$
\end{thm}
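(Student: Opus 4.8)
The plan is to track the logarithmic ratio between the dominating mixed strategy $\mathbf{p}$ and the dominated pure strategy $S_k$, and to show that this quantity grows linearly with slope at least $\lambda := \int_{\mathbb{R}}\big(h_k(x) - \sum_j p_j h_j(x)\big)\nu(dx) + (K_1 - K_2)$ plus a strictly positive jump contribution. Using the representation $r_i(t) = \exp\big(Y_i(t)\big)$ from Remark \ref{re1} together with $s_i(t) = r_i(t)/R(t)$, I would set $V(t) := \sum_j p_j \log s_j(t) - \log s_k(t)$. Since $\sum_j p_j = 1$, the common factor $\log R(t)$ cancels and $V(t) = \sum_j p_j Y_j(t) - Y_k(t)$, so $V$ is governed directly by the explicit $Y_i$-dynamics and no quotient rule for $s_i$ is needed.

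First I would compute $dV(t)$ from the $dY_i$ formula. Its drift splits into three pieces: the payoff difference $\mathbf{p}^T A\mathbf{s}(t-) - \mathbf{e}_k^T A\mathbf{s}(t-)$; the variance correction $\tfrac{\sigma_k^2}{2} - \tfrac{1}{2}\sum_j p_j\sigma_j^2 = -K_2$; and the jump compensator $\int_{\mathbb{R}}\big[\sum_j p_j\log(1+h_j(x)) - \log(1+h_k(x))\big]\nu(dx) + \int_{\mathbb{R}}\big[h_k(x) - \sum_j p_j h_j(x)\big]\nu(dx)$. By Proposition \ref{alltime} the state $\mathbf{s}(t)$ remains in $\overline{\Delta}_n$, so by the definition of $K_1$ the payoff difference is bounded below by $K_1$; and the strict standing assumption $\log(1+h_k(x)) < \sum_j p_j\log(1+h_j(x))$ makes the first jump integral equal to a strictly positive constant $\delta > 0$. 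Collecting the three pieces, the drift of $V$ is bounded below, uniformly in time, by $\lambda + \delta$.

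Next I would isolate the martingale part of $V$, namely the continuous piece $M^W(t) = \int_0^t\big(\sum_j p_j\sigma_j\,dW_j - \sigma_k\,dW_k\big)$ and the jump piece $M^J(t) = \int_0^t\int_{\mathbb{R}}\big[\sum_j p_j\log(1+h_j(x)) - \log(1+h_k(x))\big]\tilde{N}(ds,dx)$. Both have bounded, in fact constant, coefficients: $M^W$ is a continuous martingale with linear quadratic variation $\rho^2 t$ (with $\rho^2 = \sum_{j\neq k}p_j^2\sigma_j^2 + (p_k-1)^2\sigma_k^2$), while $M^J$ is a compensated finite-activity jump martingale since $\nu(\mathbb{R}) < \infty$ and the integrand is bounded (the $h_i$ are bounded and bounded away from $-1$). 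Hence the strong law of large numbers for martingales yields $M^W(t)/t \to 0$ and $M^J(t)/t \to 0$ almost surely, so $V(t) \geq V(0) + (\lambda+\delta)t + M^W(t) + M^J(t) = (\lambda+\delta)t + o(t)$ a.s. This is precisely where the presence of jumps pays off: because $\delta > 0$, the deterministic slope strictly dominates the $o(t)$ fluctuations, giving $V(t) - \lambda t \geq \tfrac{\delta}{2}t$ for all large $t$, so the Law-of-the-Iterated-Logarithm refinement used by Imhof becomes unnecessary.

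Finally, since each $s_j(t) \in (0,1)$ we have $\sum_j p_j\log s_j(t) \leq 0$, whence $-\log s_k(t) = V(t) - \sum_j p_j\log s_j(t) \geq V(t)$ and therefore $s_k(t) \leq e^{-V(t)}$. Combining with $V(t) - \lambda t \to +\infty$ gives $s_k(t)e^{\lambda t} \leq e^{-(V(t)-\lambda t)} \to 0$ almost surely, which is exactly $s_k(t) = o\big(e^{-\lambda t}\big)$. The main obstacle I anticipate is the careful bookkeeping of the It\^o jump-differential for $V$ and verifying that the compensator terms recombine precisely into $\lambda + \delta$; the almost-sure control of the two martingales is routine, and the genuinely new ingredient is recognizing that the strictly positive $\delta$ lets one reach the sharp $o\big(e^{-\lambda t}\big)$ rate without the iterated-logarithm estimate.
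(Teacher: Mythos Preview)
Your proof is correct and follows essentially the same route as the paper. You define $V(t)=-G(t)$ where the paper uses $G(t)=\log s_k(t)-\sum_j p_j\log s_j(t)$; both arrive at the same drift bound $\lambda+\delta$ and the same estimate $s_k(t)\le e^{-V(t)}$. The only cosmetic difference is in the handling of the martingale fluctuations: you invoke the martingale strong law uniformly for both $M^W$ and $M^J$, whereas the paper treats them separately, using the Law of the Iterated Logarithm for the Brownian piece and Sato's SLLN for the compound Poisson piece, and then observes (exactly as you do) that the strictly positive linear term $\delta t$ swallows the $\sqrt{t\log\log t}$ correction, so no iterated-logarithm factor appears in the final rate.
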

\begin{proof}
For a dominating mixed strategy $\mathbf{p}$, define $\displaystyle G(t) = \log \big( s_k(t) \big) - \sum_{j}p_j\log \big( s_j(t) \big)$. It\^o's lemma yields
\begin{equation}\begin{split}\label{bigG}
G(t) = & G(0) + \int_{0}^{t} \Big( \mathbf{e}_k^T A \mathbf{s}( u) - \mathbf{p}^T A \mathbf{s}( u )   -\frac{ \sigma_k^2 }{2} + \frac{1}{2} \sum_{j} p_j \sigma_j^2  \Big)du + t \int_{  \mathbb{R} } \Big( -h_k(x) + \sum_{j} p_j h_j(x)   \Big) \nu(dx) 
\\ & + t \int_{  \mathbb{R} }  \log \bigg( \frac{ 1 + h_k(x) }{ \prod_{j} \Big( 1 + h_j(x) \Big)^{p_j}  }\bigg) \nu(dx) 
 + \sigma_k W_k( t) - \sum_{j} p_j \sigma_j W_j(t)
 + \int_{0}^{t} \int_{  \mathbb{R} }  \log \Bigg( \frac{ 1 + h_k(x) }{ \prod_{j} \Big( 1 + h_j(x) \Big)^{p_j}  } \Bigg) \tilde{N}(dx,du).
\end{split}\end{equation}

\bigskip
Notice that the first two terms in the first line of Equation \eqref{bigG} are in the statement of the theorem. To help with the analysis, we simplify and understand the behavior of the last two terms in the second line. Defining $\displaystyle \tilde{\sigma} : = \Big[ (1-p_k)^2\sigma_k^2 + \sum_{ j \neq k} p_j^2 \sigma_j^2  \Big]^{1/2}$, we see that $\displaystyle \tilde{W}(t):= \bigg[  \sigma_k W_k( t) - \sum_{j} p_j \sigma_j W_j(t) \bigg] / \tilde{\sigma}$ is a standard Wiener process, and $\tilde{\sigma} \leq \sqrt{2} \sigma_{max}$. For the integral $\displaystyle \int_{0}^{t} \int_{  \mathbb{R} }  \log \bigg( \frac{ 1 + h_k(x) }{  \prod_{j} \Big( 1 + h_j(x) \Big)^{p_j}  } \bigg) \tilde{N}(dx,du)$, notice that the integrand is not dependent on the time variable. Hence $\displaystyle \int_{0}^{t} \int_{  \mathbb{R} } \log \bigg( \frac{ 1 + h_k(x) }{  \prod_{j} \Big( 1 + h_j(x) \Big)^{p_j}  } \bigg) N(dx,du)$ is a compound Poisson process. Since $\nu\big(  \mathbb{R} \big)<\infty$,  Theorem 36.5 in Sato \cite{S99} tells us that 
$$
\lim_{t \to \infty} \frac{ 1 }{ t } \int_{0}^{t} \int_{ \mathbb{R} }  \log \bigg( \frac{ 1 + h_k(x) }{  \prod_{j} \Big( 1 + h_j(x) \Big)^{p_j}  } \bigg) N(dx,du)= \int_{ \mathbb{R} } \log \bigg( \frac{ 1 + h_k(x) }{  \prod_{j} \Big( 1 + h_j(x) \Big)^{p_j}  } \bigg) \nu(dx)  \ \ \mbox{a.s.}
$$ 
Therefore
\begin{equation}\label{cp}
\lim_{t \to \infty} \frac{ 1 }{ t } \int_{0}^{t} \int_{  \mathbb{R} }  \log \bigg( \frac{ 1 + h_k(x) }{  \prod_{j} \Big( 1 + h_j(x) \Big)^{p_j}  } \bigg) \tilde{N}(dx,du)= 0 \ \ \mbox{a.s.}
\end{equation}.

\bigskip
We are now ready to determine the behavior of the process at the dominated pure strategy. Getting a better bound on $G(t)$, notice $P_{ \mathbf{y} }$ almost surely 
\begin{equation}\begin{split}
G(t) & \leq G(0) + \Big( K_2 - K_1 \Big)t   +  t \int_{  \mathbb{R} } \Bigg( -h_k(x) +  \sum_{j} p_j h_j(x)+ \log \bigg( \frac{ 1 + h_k(x) }{ \prod_{j} \Big( 1 + h_j(x) \Big)^{p_j} } \bigg)  \Bigg)\nu(dx)
\\ & + \tilde{\sigma} \tilde{W}(t) + \int_{0}^{t} \int_{  \mathbb{R} }  \log \bigg( \frac{ 1 + h_k(x) }{  \prod_{j} \Big( 1 + h_j(x) \Big)^{p_j} } \bigg) \tilde{N}(dx,du) . 
\end{split}\end{equation}
Note that the assumption on the jump functions gives us $\displaystyle \log \bigg( \frac{ 1 + h_k(x) }{  \prod_{j} \Big( 1 + h_j(x) \Big)^{p_j} } \bigg)<0$, for all $x$.

\bigskip

To show that $s_k(t)$ goes to zero almost surely, we consider
\begin{equation}\begin{split}
& \limsup_{t \to \infty} s_k(t)  \exp\Bigg[ t \int_{  \mathbb{R} } \Big( h_k(x) - \sum_{j} p_j h_j(x)  \Big)\nu(dx) +  (K_1 - K_2)t  \Bigg] \\
& \leq  \limsup_{t \to \infty}  \exp\Bigg[ G(t) + t \int_{  \mathbb{R} }  \Big( h_k(x) - \sum_{j} p_j h_j(x)  \Big)\nu(dx) + (K_1 - K_2)t  \Bigg] \\
& \leq  \limsup_{t \to \infty}  \exp\Bigg[ G(0) + t \int_{  \mathbb{R} } \log \bigg( \frac{  1 + h_k(x)  }{  \prod_{j} \Big( 1 + h_j(x) \Big)^{p_j}    } \bigg)\nu(dx) + \int_{0}^{t} \int_{  \mathbb{R} }  \log \bigg( \frac{ 1 + h_k(x) }{ \prod_{j} \Big( 1 + h_j(x) \Big)^{p_j}  } \bigg) \tilde{N}(dx,du) +  \tilde{\sigma} \tilde{W}(t)   \Bigg] .
\end{split}\end{equation}
The Law of the Iterated Logarithm tell us that $\displaystyle \limsup_{t \to \infty}  \exp\Bigg[ G(0) +  \tilde{\sigma} \tilde{W}(t)  - 3\sigma_{ \max} \sqrt{  t \log \log t  } \Bigg]=0$ a.s. But, $\displaystyle 3\sigma_{ \max} \sqrt{  t \log \log t  } < t  \int_{  \mathbb{R} }  -\log \bigg( \frac{ 1 + h_k(x) }{ \prod_{j} \Big( 1 + h_j(x) \Big)^{p_j}  } \bigg) \nu(dx)$, for large enough $t$. Therefore, considering Equation \eqref{cp}, we are able to conclude that
\begin{equation*}
\limsup_{t \to \infty}  \exp\Bigg[ G(0) + \int_{0}^{t} \int_{  \mathbb{R} }  \log \bigg( \frac{ 1 + h_k(x) }{ \prod_{j} \Big( 1 + h_j(x) \Big)^{p_j}  } \bigg) \tilde{N}(dx,du) +  \tilde{\sigma} \tilde{W}(t)  + t  \int_{  \mathbb{R} }  \log \bigg( \frac{ 1 + h_k(x) }{ \prod_{j} \Big( 1 + h_j(x) \Big)^{p_j}  } \bigg) \nu(dx) \Bigg]=0. \ a.s.
\end{equation*}

\end{proof}

\section{Conditions for Recurrence Near an Interior Evolutionary Stable Strategy}

Take $A$ to be a payoff matrix for a symmetric game. A strategy $\mathbf{p}\in \overline{ \Delta}_n$ is called an evolutionary stable strategy if: $\mathbf{q}^T A \mathbf{p} \leq \mathbf{p}^T  A \mathbf{p}$ for all $\mathbf{q} \in \overline{ \Delta}_n$; and for $\mathbf{q} \in \overline{ \Delta}_n$ where $\mathbf{q} \neq \mathbf{p}$ and $\mathbf{q}^T   A \mathbf{p} = \mathbf{p}^T A \mathbf{p}$, we have that $\mathbf{q} \cdot  A \mathbf{q} < \mathbf{p} \cdot  A \mathbf{q}$. Imhof \cite{I05} considered an internal evolutionary stable strategy where the payoff matrix is conditional negative definite, which is defined below. From these assumptions, the author was then able to show conditions for the process to be positive Harris recurrent, and for the mass of the invariant measure to be in a neighborhood of the evolutionary stable strategy. We give conditions for Equation \eqref{ndim} to be positive Harris recurrent. The assumptions for the theorem are similar to Imhof's, but are more stringent. The complexity that the compensated Poissonian term adds to the stochastic replicator dynamic could easily override the recurrence of the process in a neighborhood of the evolutionary stable strategy. Since the process is right-continuous, a different and more tedious method is needed to show the process is positive Harris recurrent. The author's results are used to help display that this property holds. 

\smallskip
For the purposes of the proof of the theorem below, we take $\displaystyle d(\mathbf{y}, \mathbf{p} )$ as the Kullback-Leibler distance, i.e.,  $\displaystyle d(\mathbf{y}, \mathbf{p} )= \sum_{j} p_j \log \big( p_j / y_j \big)$, where $\log \big( p_j / y_j \big)=0$ if $p_j=0$ or $y_j=0$.

\bigskip
\begin{de}
A matrix $A$ is said to be conditionally negative definite if for $\mathbf{y}\in\mathbb{R}^n\backslash\{\mathbf{0}\}$ where $\mathbf{1}^T\mathbf{y}=0$, we have
$$
\mathbf{y}^T A \mathbf{y}<0.
$$ 
\end{de}

\bigskip
\begin{lemmaI}
Suppose that $A$ is an $n\times n$ ($n\geq2$) conditionally negative definite matrix, define $\displaystyle \overline{A}=\frac{1}{2}\big( A+A^T \big)$ and let $\lambda_2$ be the second largest eigenvalue of 
$$
D:= \overline{A} - \frac{1}{n} \overline{A} \mathbf{1} \mathbf{1}^T - \frac{1}{n} \mathbf{1} \mathbf{1}^T \overline{A} + \frac{ \mathbf{1}^T \overline{A}\mathbf{1} }{n} \mathbf{1}\mathbf{1}^T .
$$ 
Then
$$
\max_{ \substack{ \mathbf{x}^T \mathbf{1} =0 \\ \mathbf{x} \neq 0 }  } \frac{ \mathbf{x}^T D \mathbf{x}  }{\mathbf{x}^T\mathbf{x}} =\max_{ \substack{ \mathbf{x}^T \mathbf{1} =0 \\ \mathbf{x} \neq 0 }  } \frac{ \mathbf{x}^T A \mathbf{x}  }{\mathbf{x}^T\mathbf{x}} = \lambda_2 <0.
$$
\end{lemmaI}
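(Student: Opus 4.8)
The plan is to reduce everything to the quadratic form of $\overline{A}$ restricted to the hyperplane $V := \{\mathbf{x} \in \mathbb{R}^n : \mathbf{x}^T\mathbf{1} = 0\}$, and then to read off $\lambda_2$ as the top eigenvalue of a compression. First I would record the two elementary identities that make the left-hand equality immediate. For every $\mathbf{x}$ one has $\mathbf{x}^T A \mathbf{x} = \mathbf{x}^T \overline{A}\mathbf{x}$, since the skew-symmetric part of $A$ drops out of the quadratic form. Moreover, for every $\mathbf{x}\in V$ each of the three correction terms defining $D$ contributes a factor $\mathbf{1}^T\mathbf{x} = 0$ to the quadratic form (the terms $-\frac1n\overline{A}\mathbf{1}\mathbf{1}^T$ and $-\frac1n\mathbf{1}\mathbf{1}^T\overline{A}$ and $\frac{\mathbf{1}^T\overline{A}\mathbf{1}}{n}\mathbf{1}\mathbf{1}^T$ all carry such a factor), so that $\mathbf{x}^T D\mathbf{x} = \mathbf{x}^T\overline{A}\mathbf{x}$ as well. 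Hence $\mathbf{x}^T D\mathbf{x} = \mathbf{x}^T A\mathbf{x} = \mathbf{x}^T\overline{A}\mathbf{x}$ for all $\mathbf{x}\in V$, which yields $\max_{\mathbf{x}^T\mathbf{1}=0}\frac{\mathbf{x}^T D\mathbf{x}}{\mathbf{x}^T\mathbf{x}} = \max_{\mathbf{x}^T\mathbf{1}=0}\frac{\mathbf{x}^T A\mathbf{x}}{\mathbf{x}^T\mathbf{x}}$ with no further work. It then remains to identify this common value with $\lambda_2$ and to show that it is negative.

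Second, I would set up the spectral picture of $D$. Let $J := I - \frac1n\mathbf{1}\mathbf{1}^T$ be the orthogonal projection onto $V$; the point of this particular combination of correction terms is that $D$ is the doubly centered matrix $J\overline{A}J$, so that $\mathbf{1}$ is an eigenvector of $D$ with $D\mathbf{1}=0$. (I would check this identity carefully, since it is exactly what pins down the coefficient of the $\mathbf{1}\mathbf{1}^T$ term.) Because $\overline{A}$ is symmetric, so is $D$; and because $\mathbf{1}$ is an eigenvector, its orthogonal complement $V$ is $D$-invariant, and $D|_{V}$ is the compression of $\overline{A}$ to $V$, a self-adjoint operator on the $(n-1)$-dimensional space $V$. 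The spectrum of $D$ is therefore the eigenvalue $0$ coming from the $\mathbf{1}$-direction together with the $n-1$ eigenvalues of $D|_{V}$.

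Third, I would invoke conditional negative definiteness together with the Rayleigh–Courant–Fischer principle. By hypothesis $\mathbf{x}^T\overline{A}\mathbf{x} = \mathbf{x}^T A\mathbf{x} < 0$ for every nonzero $\mathbf{x}\in V$, so every eigenvalue of the compression $D|_{V}$ is strictly negative, and its largest eigenvalue equals $\max_{\mathbf{x}\in V,\,\mathbf{x}\neq 0}\frac{\mathbf{x}^T\overline{A}\mathbf{x}}{\mathbf{x}^T\mathbf{x}}$ by the Rayleigh characterization. Since all $n-1$ eigenvalues of $D|_{V}$ are $<0$ while the $\mathbf{1}$-direction contributes the strictly larger eigenvalue $0$, the largest eigenvalue of $D$ is $\lambda_1=0$ and the second largest is precisely the top eigenvalue of $D|_{V}$. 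Combining the three steps gives $\lambda_2 = \max_{\mathbf{x}\in V}\frac{\mathbf{x}^T\overline{A}\mathbf{x}}{\mathbf{x}^T\mathbf{x}} = \max_{\mathbf{x}^T\mathbf{1}=0}\frac{\mathbf{x}^T A\mathbf{x}}{\mathbf{x}^T\mathbf{x}} < 0$, which is the assertion.

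The main obstacle is matching the constrained maximum with the \emph{second} largest eigenvalue rather than with some other eigenvalue of $D$. This hinges entirely on the eigenvalue split off in the $\mathbf{1}$-direction sitting strictly above the whole spectrum of $D|_{V}$: applying Courant–Fischer to the $(n-1)$-dimensional subspace $V$ gives only the one-sided bound $\lambda_2 \leq \max_{\mathbf{x}\in V}\frac{\mathbf{x}^T D\mathbf{x}}{\mathbf{x}^T\mathbf{x}}$ for free, and the reverse inequality requires knowing that the $\mathbf{1}$-direction eigenvalue is the maximal one. The double centering makes that eigenvalue equal to $0$, and conditional negative definiteness makes the $V$-part strictly negative, so the two facts together close the gap. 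I would therefore treat the verification that $D\mathbf{1}=0$, equivalently that $\mathbf{1}\in\ker D$, as the load-bearing computation of the proof.
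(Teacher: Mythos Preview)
The paper does not prove this lemma; it is quoted from Imhof \cite{I05} and used as a black box in the proof of Theorem~7.1. So there is no ``paper's proof'' to compare against, and your argument is the right one in outline: pass to $\overline{A}$, observe that the correction terms vanish on $V=\{\mathbf{x}:\mathbf{1}^T\mathbf{x}=0\}$, note that $\mathbf{1}$ is an eigenvector of the symmetric matrix $D$ so that $V$ is $D$-invariant, and then read off the constrained maximum as the top eigenvalue of the compression $D|_V$.

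There is one point where your plan does not match the statement as written. You assert that $D=J\overline{A}J$ with $J=I-\tfrac{1}{n}\mathbf{1}\mathbf{1}^T$, but expanding $J\overline{A}J$ gives a final term $\tfrac{\mathbf{1}^T\overline{A}\mathbf{1}}{n^{2}}\mathbf{1}\mathbf{1}^T$, whereas the displayed $D$ carries $\tfrac{\mathbf{1}^T\overline{A}\mathbf{1}}{n}\mathbf{1}\mathbf{1}^T$. With the coefficient as printed, one computes $D\mathbf{1}=\tfrac{n-1}{n}\bigl(\mathbf{1}^T\overline{A}\mathbf{1}\bigr)\mathbf{1}$, so $\mathbf{1}$ is still an eigenvector but with eigenvalue $\mu=\tfrac{n-1}{n}\mathbf{1}^T\overline{A}\mathbf{1}$ rather than $0$. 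The restriction $D|_V$ is unaffected (for $\mathbf{x}\in V$ one has $D\mathbf{x}=J\overline{A}\mathbf{x}$ in either case), so your identification of the constrained maximum with the top eigenvalue of $D|_V$ survives. What no longer follows automatically is that this top eigenvalue is the \emph{second} largest eigenvalue of $D$: that requires $\mu$ to dominate the spectrum of $D|_V$, which is clear when $\mathbf{1}^T\overline{A}\mathbf{1}\ge 0$ but not in general when $\mathbf{1}^T\overline{A}\mathbf{1}<0$. This is almost certainly a transcription slip in the statement (Imhof's original has the doubly centered matrix $J\overline{A}J$, i.e.\ denominator $n^2$), and with that correction your proof goes through verbatim; but since you flagged the verification of $D\mathbf{1}=0$ as the load-bearing step, you should note the discrepancy rather than claim the identity holds.
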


\bigskip
\begin{thm}
Take $\mathbf{s}(t)$ defined in Equation \eqref{ndim}, $\mathbf{p} \in \Delta_n $ an ESS for our payoff matrix $A$, $\lambda_2$ as the second largest eigenvalue of $D$, and define 
\begin{equation*}\begin{split}
\kappa_J^2 & = \frac{1}{2} \sum_{j} p_j \sigma_j^2 - \frac{ 1 }{ 2 \sum_{j} \sigma_j^{-2} } + \int_{  \mathbb{R} }  \max_{k} h_k(x) \nu(dx)
\\ & +  \sum_{j} p_j \int_{  \mathbb{R} } \log \bigg( \frac{ 1+ \max_{k} h_k(x) }{ 1+ h_j(x) } \bigg) \nu(dx) - \sum_{j} p_j \int_{ \mathbb{R} } h_j(x) \nu(dx). 
\end{split}\end{equation*}
Assume that $\displaystyle 0<  \kappa_J < \frac{n}{n-1}\sqrt{ | \lambda_2 | } \min_{1\leq j \leq n} p_j$, $A$ is conditionally negative definite, and that $\displaystyle \int_{  \mathbb{R} } \bigg( \sum_{j}(p_j-y_j) h_j(x) - 1 \bigg) \nu(dx) <0$ holds for all $\mathbf{y} \in \Delta_n$. Then for $\delta>0$ such that $\delta^2>\kappa_J^2 / |\lambda_2|$, $y\in\Delta_n$, and $t>0$, we have the inequalities

\begin{equation} \label{ess1}
\mathbb{E}_{ \mathbf{y} } \big[  \tau_{ \overline{U}_{\delta}( \mathbf{p} ) } \big] \leq \frac{ d(\mathbf{y}, \mathbf{p} ) } { | \lambda_2| \delta^2 - \kappa_J^2 }.
\end{equation}
and

\begin{equation} \label{ess2}
 \mathbb{E}_{ \mathbf{y} } \left[ \frac{1}{t} \int_{0}^{ t } \Big|  \mathbf{s}(u) - \mathbf{p} \Big|^2 du \right] \leq \frac{1}{ | \lambda_2 | } \bigg( \frac{d \big( \mathbf{y}, \mathbf{p} \big)}{t} + \kappa_J^2   \bigg).
\end{equation}

\smallskip
Lastly, an invariant measure of the stochastic replicator dynamic, which we call $\pi(\cdot)$, exists, is unique, and satisfies the inequality
\begin{equation} \label{ess3}
\pi\Big( U_{\delta}(\mathbf{p}) \Big) \geq 1 - \frac{\kappa_J^2}{|\lambda_2| \delta^2}.
\end{equation}
\end{thm}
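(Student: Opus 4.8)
The plan is to take the Kullback--Leibler distance $V(\mathbf{y})=d(\mathbf{y},\mathbf{p})=\sum_j p_j\log(p_j/y_j)$ as a Lyapunov function and to reduce all three inequalities to the single drift estimate $\mathcal{A}_J V(\mathbf{y})\le \lambda_2|\mathbf{y}-\mathbf{p}|^2+\kappa_J^2$ valid for every $\mathbf{y}\in\Delta_n$. Since $\partial V/\partial y_i=-p_i/y_i$ and $\partial^2 V/\partial y_i\partial y_j=\delta_{ij}p_i/y_i^2$, the continuous (drift-plus-diffusion) part of $\mathcal{A}_J V$ is exactly the expression Imhof \cite{I05} treats for the pure stochastic replicator dynamic: the $A$-part of the drift equals $(\mathbf{y}-\mathbf{p})^T A\mathbf{y}$, and because $\mathbf{p}$ is an interior ESS one has $A\mathbf{p}$ equal to a multiple of $\mathbf{1}$ and hence $(\mathbf{y}-\mathbf{p})^T A\mathbf{p}=0$, so this reduces to $(\mathbf{y}-\mathbf{p})^T A(\mathbf{y}-\mathbf{p})$, which the eigenvalue lemma of Imhof above bounds by $\lambda_2|\mathbf{y}-\mathbf{p}|^2$ on the hyperplane $\mathbf{1}^T(\mathbf{y}-\mathbf{p})=0$. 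Combining the $\diag(\sigma_1^2,\dots,\sigma_n^2)$-part of the drift with the second-order term and optimizing as in Imhof gives the continuous constant $\frac12\sum_j p_j\sigma_j^2-\frac{1}{2\sum_j\sigma_j^{-2}}$.

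The genuinely new part is the Poissonian contribution $\int_{\mathbb{R}}\big(V(\mathbf{y}+D^3(\mathbf{y}))-V(\mathbf{y})\big)\nu(dx)$ together with the jump piece of $\tilde{D}^1$. A direct substitution of the explicit form of $D^3$ yields the telescoping identity $V(\mathbf{y}+D^3(\mathbf{y}))-V(\mathbf{y})=\log\big(1+\mathbf{y}^T\mathbf{h}(x)\big)-\sum_j p_j\log\big(1+h_j(x)\big)$, where the $\log y_i$ terms cancel because $\sum_j p_j=1$. Bounding $\mathbf{y}^T\mathbf{h}(x)\le\max_k h_k(x)$ and using monotonicity of the logarithm controls this term by $\sum_j p_j\int\log\frac{1+\max_k h_k(x)}{1+h_j(x)}\nu(dx)$, while the remaining $\mathbf{y}$-dependent drift piece $\int(\mathbf{p}-\mathbf{y})^T\mathbf{h}(x)\,\nu(dx)$ coming from $\tilde{D}^1$ is absorbed into the constant $\int\max_k h_k\,\nu(dx)-\sum_j p_j\int h_j\,\nu(dx)$ with the help of the standing hypothesis $\int(\sum_j(p_j-y_j)h_j(x)-1)\nu(dx)<0$. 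Collecting everything reproduces the stated $\kappa_J^2$. I expect this jump-drift bound --- turning the $\mathbf{y}$-dependent compensated terms into a single constant while preserving the coefficient $\lambda_2$ in front of $|\mathbf{y}-\mathbf{p}|^2$ --- to be the main obstacle, since the compensator forces the awkward correction carried in $\tilde{D}^1$, and the hypotheses $0<\kappa_J<\frac{n}{n-1}\sqrt{|\lambda_2|}\min_j p_j$ and the integral assumption are precisely what is needed both to close it and to keep $\overline{U}_\delta(\mathbf{p})$ a compact subset of the open simplex.

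Granting the drift estimate, \eqref{ess1} and \eqref{ess2} are Dynkin-formula computations. For \eqref{ess1}, the choice $\delta^2>\kappa_J^2/|\lambda_2|$ makes $\mathcal{A}_J V\le-(|\lambda_2|\delta^2-\kappa_J^2)<0$ on $\Delta_n\setminus\overline{U}_\delta(\mathbf{p})$, where $|\mathbf{y}-\mathbf{p}|>\delta$; applying Dynkin's formula up to $\tau_{\overline{U}_\delta(\mathbf{p})}\wedge t$, discarding the nonnegative quantity $\mathbb{E}_{\mathbf{y}}[V(\mathbf{s}(\tau\wedge t))]\ge0$, and letting $t\to\infty$ by monotone convergence gives $(|\lambda_2|\delta^2-\kappa_J^2)\,\mathbb{E}_{\mathbf{y}}[\tau_{\overline{U}_\delta(\mathbf{p})}]\le d(\mathbf{y},\mathbf{p})$. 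For \eqref{ess2}, Dynkin's formula on $[0,t]$ without stopping, the bound $\mathcal{A}_J V\le-|\lambda_2||\mathbf{s}(u)-\mathbf{p}|^2+\kappa_J^2$, and $\mathbb{E}_{\mathbf{y}}[V(\mathbf{s}(t))]\ge0$ give $|\lambda_2|\,\mathbb{E}_{\mathbf{y}}\!\int_0^t|\mathbf{s}(u)-\mathbf{p}|^2du\le d(\mathbf{y},\mathbf{p})+\kappa_J^2 t$, which is \eqref{ess2} after dividing by $|\lambda_2|t$.

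Finally, \eqref{ess1} shows the expected hitting time of the compact interior set $\overline{U}_\delta(\mathbf{p})$ is finite; together with the nondegeneracy of the Gaussian coefficient in the interior, this yields positive Harris recurrence and hence a unique invariant probability measure $\pi$, via the recurrence criteria of Gihman and Skorohod \cite{GS72} already invoked above. Applying the ergodic theorem to the bounded function $f(\mathbf{y})=|\mathbf{y}-\mathbf{p}|^2$ and letting $t\to\infty$ in \eqref{ess2} gives $\int_{\Delta_n}|\mathbf{y}-\mathbf{p}|^2\,\pi(d\mathbf{y})\le\kappa_J^2/|\lambda_2|$, and a Markov inequality under $\pi$ then yields $\pi\big(|\mathbf{y}-\mathbf{p}|\ge\delta\big)\le\kappa_J^2/(|\lambda_2|\delta^2)$, which is exactly \eqref{ess3}. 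The delicate point in this last step is justifying positive Harris recurrence and the ergodic limit for a right-continuous jump process rather than a diffusion; here the finite-moment bounds established earlier (Theorem 4.2 and the following remark) supply the regularity needed to pass to the invariant measure.
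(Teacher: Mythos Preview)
Your treatment of the drift bound $\mathcal{A}_J v(\mathbf{y})\le\lambda_2|\mathbf{y}-\mathbf{p}|^2+\kappa_J^2$ for $v(\mathbf{y})=d(\mathbf{y},\mathbf{p})$, and the deduction of \eqref{ess1} and \eqref{ess2} via Dynkin's formula, matches the paper. So does the final Markov/Chebyshev step from \eqref{ess2} to \eqref{ess3}. But there is a genuine gap in your handling of the existence and uniqueness of the invariant measure, and you have misidentified where the integral hypothesis $\int_{\mathbb{R}}\big(\sum_j(p_j-y_j)h_j(x)-1\big)\nu(dx)<0$ is used.

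That hypothesis is \emph{not} needed to bound the jump contribution to $\mathcal{A}_J v$: the term $\int_{\mathbb{R}}\sum_k y_k h_k(x)\,\nu(dx)$ is dominated by $\int_{\mathbb{R}}\max_k h_k(x)\,\nu(dx)$ simply because $\sum_k y_k h_k(x)\le\max_k h_k(x)$ pointwise; no extra assumption enters. The hypothesis is instead required in the paper for a \emph{second} Lyapunov function, $V(\mathbf{y})=K+\prod_l y_l^{-p_l}$ with $K$ large, for which one establishes a Foster--Lyapunov drift condition $\mathcal{A}_J V\le -cV + b\,\mathbf{1}_{U_\delta(\mathbf{p})}$ with $c>0$. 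After a lengthy computation $\mathcal{A}_J V(\mathbf{y})\le C(\mathbf{y})\prod_l y_l^{-p_l}+\varsigma$, and the integral hypothesis is exactly what forces the coefficient $C(\mathbf{y})$ to be negative off $U_\delta(\mathbf{p})$. This drift condition, together with $\psi$-irreducibility and aperiodicity (the latter verified by observing that with positive probability no jump occurs on a finite interval, so the process coincides with Imhof's diffusion, for which his Theorem~2.1 applies), feeds into Theorem~5.2 of Down, Meyn and Tweedie \cite{DMT95} to give existence and uniqueness of $\pi$ and the ergodic convergence needed for \eqref{ess3}. Your appeal to ``recurrence criteria of Gihman and Skorohod'' and nondegeneracy of the diffusion coefficient does not supply this: finite expected hitting times of a compact set alone do not yield positive Harris recurrence for a right-continuous jump process, and the paper's route through a geometric drift condition on a second, coercive function is the substantive missing ingredient.
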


\begin{proof}
For $\mathbf{p }\in \Delta_n$ an ESS for $A$, define the function $v( \mathbf{y} ) = \sum_{j} p_j \log \big( p_j / y_j \big)$, (hence, $v(y)$ tells us how close $\mathbf{y}$ is to $\mathbf{p}$).  Applying $\mathcal{A}_J$ (the infinitesimal generator) to $v$, we see that
\begin{equation*}\begin{split}
\mathcal{A}_J v ( \mathbf{y} ) & = - \sum_{j} p_i( \mathbf{e}_i - \mathbf{y})^T \Big[ A -\diag(\sigma_1^2,\dots,\sigma_n^2) \Big] \mathbf{y}  + \frac{1}{2}\sum_{j} p_j \bigg( \sigma_j^2 - 2 y_j \sigma_j^2+ \sum_{k} y_k^2 \sigma_k^2  \bigg)
\\ & -  \sum_{j} p_j \int_{  \mathbb{R} } \bigg( h_j(x) - \sum_{k} y_k h_k(x) \bigg)\nu(dx)  + \sum_{j} p_j \int_{  \mathbb{R} } \log\bigg( \frac{1+\sum_{k}y_kh_k(x)}{1+h_j(x)}  \bigg)\nu(dx)
\\ & =  (\mathbf{y} - \mathbf{p})^T A \mathbf{y}  - \frac{1}{2}\sum_{j}y_j^2\sigma_j^2 + \frac{1}{2}\sum_{j} p_j \sigma_j^2
-  \sum_{j} p_j \int_{  \mathbb{R} } \bigg( h_j(x) - \sum_{k} y_k h_k(x) \bigg)\nu(dx)  
\\ & + \sum_{j} p_j \int_{  \mathbb{R} } \log\bigg( \frac{1+\sum_{k}y_k h_k(x)}{1+h_j(x)}  \bigg)\nu(dx)
\\ & \leq  (\mathbf{y} - \mathbf{p})^T A \mathbf{y}  - \frac{1}{2}\sum_{j}y_j^2\sigma_j^2 + \frac{1}{2}\sum_{j} p_j \sigma_j^2
+ \int_{  \mathbb{R} } \max_k h_k (x) \nu(dx) -   \sum_{j} p_j \int_{  \mathbb{R} }  h_j(x) \nu(dx)  
\\ & + \sum_{j} p_j \int_{  \mathbb{R} } \log\bigg( \frac{1+\max_k h_k(x)}{1+ h_j(x)}  \bigg)\nu(dx).
\end{split}\end{equation*}

Determining an upper bound on $\mathcal{A}_J v ( \mathbf{y} )$, we see that $\displaystyle (\mathbf{y} - \mathbf{p})^T A \mathbf{y} \leq (\mathbf{y} - \mathbf{p})^T A (\mathbf{y} - \mathbf{p})  \leq  \lambda_2 | \mathbf{y} - \mathbf{p} |^2$. Moreover, Cauchy-Schwarz yields $\displaystyle 1 \leq \bigg( \sum_{j}y_j^2\sigma_j^2 \bigg) \sum_{j}\sigma_j^{-2}$, which gives $\displaystyle -\frac{1}{2}\sum_{j}y_j^2\sigma_j^2 \leq - \frac{ 1 }{ 2 \sum_{j} \sigma_j^{-2} } $. Thus, for $\mathbf{y}\in\Delta_n$,
$$
\mathcal{A}_J v ( \mathbf{y} ) \leq \lambda_2 | \mathbf{y} - \mathbf{p} |^2 + \kappa_J^2. 
$$

\smallskip
Our assumption $\delta^2>\kappa_J^2 / |\lambda_2|$ tells us for $\mathbf{y}\in\Delta_n \backslash  U_{\delta}( \mathbf{p} )$, $\mathcal{A}_J v ( \mathbf{y} ) \leq \lambda_2\delta^2 + \kappa_J^2$. By It\^o's lemma, the process $v( \mathbf{s}(t) ) - (\lambda_2\delta^2 + \kappa_J^2)t $ is a local supermartingale on the interval $[0, \tau_{ \overline{U}_{\delta}( \mathbf{p} ) } )$. Therefore $v( \mathbf{y} ) \geq \Big(  | \lambda_2| \delta^2 - \kappa_J^2 \Big) \mathbb{E}_{ \mathbf{y} } \big[  \tau_{ \overline{U}_{\delta}( \mathbf{p} ) } \big]$, which shows the inequality in Equation \eqref{ess1}. The strong Markov property tells us that $\mathbf{s}(t)$ is recurrent in the set $U_{\delta}( \mathbf{p} )$. Furthermore, by choosing a $\delta_0>0$ where $\displaystyle \kappa_J/\sqrt{ | \lambda_2 | } < \delta_0 <\frac{n}{n-1} \min_{1\leq j \leq n} p_j$ one can see that $\displaystyle \overline{\Delta}_n \backslash \Big\{ \Delta_n \cap \overline{U}_{\delta}( \mathbf{p} ) \Big\}= \emptyset$. Thus, $\mathbf{s}(t)$ never hits the boundary and we are able to choose any $\delta>0$ for which the inequality holds. 

\smallskip
Now define $\tau_{k}=\inf \{ t>0 : v\big( \mathbf{s}(t) \big)\ \geq k \}$, where $k > v( \mathbf{y} )$. Applying Dynkin's formula we see that 
\begin{equation*}\begin{split}
0 & \leq \mathbb{E}_{ \mathbf{y} } \Big[ v( \mathbf{s} \big( t \wedge \tau_k \big)   \Big] = v( \mathbf{y} ) + \mathbb{E}_{ \mathbf{y} } \left[ \int_{0}^{ t \wedge \tau_k } \mathcal{A}_J  v \Big( \mathbf{s}(u) \Big) du \right]
 \leq v( \mathbf{y} ) + \lambda_2 \mathbb{E}_{ \mathbf{y} } \left[ \int_{0}^{ t \wedge \tau_k } |  \mathbf{s}(u) - \mathbf{p}|^2 du \right] + \kappa_J^2 \mathbb{E}_{ \mathbf{y} } \big[  t \wedge \tau_k \big]
\end{split}\end{equation*}
Since $ t \wedge \tau_k \to t$ as $k \to \infty$, the bounded convergence theorem yields Equation \eqref{ess2}.

\smallskip 
To show Equation \eqref{ess3} we need to show that the transition probabilities converge in total variation to an invariant measure (which makes this measure unique).  To accomplish this task we will apply Theorem 5.2 in Down et al \cite{DMT95}. In order to satisfy the hypotheses of the theorem, we need to show that our process is $\psi$-irreducible (page 1674 \cite{DMT95}) and aperdiodic (page 1675 \cite{DMT95}). To show the $\psi$-irreducible condition, we define the Borel measure $\psi(O)= M\Big(O \cap U_{\delta}( \mathbf{p} ) \Big)$, where $M$ is the Lebesgue measure, and $\displaystyle \eta_{ O }:= \int_{0}^{\infty} \mathbf{1}_{ \{ \mathbf{s}(t)\in O  \} }dt$, which is the occupancy time. Since we know our process is recurrent in $ U_{\delta}( \mathbf{p} )$ , if $\psi(O)>0$ then $\mathbb{E}_{ \mathbf{y} }[ \eta_{ O } ]>0$. 

\smallskip 
To show the aperiodic condition we need to find a small Borel set $B$ and a time $T$ such that $P_{\mathbf{y}}(t, B)>0$ for all $t \geq T$ and all $\mathbf{y} \in B$. A clear candidate for $B$ is the set $U_{\delta}( \mathbf{p} )$. Before we show this conditions holds, we note that since the Poisson measure is generated by a L\'evy process, (and so the initial condition for L\'evy process is Dirac measure $\delta_{ 0 }$), and independent of all the Wiener processes, the jumps are only dependent on time.

\smallskip
To show this condition holds, we follow the proof of Claim 1 given in \cite{M07}. Since $\nu \big( \mathbb{R} \big)<\infty$, we may rewrite $\mathbf{s}(t)$ as 
$$
\mathbf{s}(t) = \mathbf{y} + \int_{0}^{t}  \hat{D}^1\big( \mathbf{s}(t-) \big)dt + \int_{0}^{t}  D^2\big( \mathbf{s}(t-) \big)d\mathbf{W}(t) + \int_{0}^{t}  \int_{ \mathbb{R} } D^3(  \mathbf{s}(t-) )N(dt,dx),
$$
where
$$
\hat{D}^1\big(\mathbf{y} \big) = \Big[ \diag(y_1,\ldots,y_n) - \mathbf{y}\mathbf{y}^T \Big] \Big[ A - \diag(\sigma_1^2,\ldots, \sigma_n^2 ) \Big] \mathbf{y} + \int_{  \mathbb{R} }\bigg( \mathbf{y}\mathbf{h}(x)^T - \diag( h_1(x),\dots,h_n(x) \bigg)  \nu(dx).
$$
For the finite interval $[0,t']$, there is a positive probability $P_{\mathbf{y}}$  that a jump does not occur. On this event, $\mathbf{s}(t)$  agrees with the process
$$
\mathbf{l}(t) = \mathbf{y} +\int_{0}^{t} \hat{D}^1\big( \mathbf{l}(t) \big)dt + \int_{0}^{t}  D^2\big( \mathbf{l}(t) \big)d\mathbf{W}(t).
$$
Thus, considering Theorem 2.1 in Imhof \cite{I05}, the condition holds.

\smallskip
Lastly, we need to show for a function $V \in D( \mathcal{A}_{J})$, where $V \geq 1$, there are constants $c,b>0$ such that $\mathcal{A}_J V( \cdot ) \leq -c V( \cdot ) + b \mathbf{1}_{ U_{\delta}( \mathbf{p} ) }( \cdot)$. Define $\displaystyle V ( \mathbf{y} ) = K + \prod_{l}y_l^{- p_l}$, where $K$ is a positive constant which will later be determined. 
So
\begin{equation*}\begin{split}
\mathcal{A}_J V( \mathbf{y} ) & = - \sum_{i} p_i \vast[  (\mathbf{e}_i - \mathbf{y})^T \Big[ A -\diag(\sigma_1^2, \ldots, \sigma_n^2) \Big] \mathbf{y} + \int_{  \mathbb{R} } \bigg( \sum_{j}y_j h_j(x) - h_i(x) \bigg)\nu(dx)  \vast] \prod_{l}y_l^{ - p_l}
\\ & +\frac{1}{2} \sum_{i} p_i(p_i+1)\vast[ (1-2y_i)\sigma_i^2 + \sum_{j} y_j^2 \sigma_j^2 \vast] \prod_{l}y_l^{ - p_l}  + \frac{1}{2} \sum_{i} \sum_{i \neq k}  p_ip_k \vast[ \sum_{j} y_j^2 \sigma_j^2 - y_i\sigma_i^2 - y_k \sigma_k^2 \vast] \prod_{l}y_l^{ - p_l}
\\ & + \int_{  \mathbb{R} } \bigg(  V\Big( D^3(\mathbf{y} ) + \mathbf{y} \Big) - V\big( \mathbf{y} \big)  \bigg) \nu(dx) 
\\ & = ( \mathbf{y} -\mathbf{p})^T A \mathbf{y} \cdot \prod_{l}y_l^{ - p_l} + \int_{  \mathbb{R} } \sum_{j}(p_j-y_j) h_j(x) \nu(dx)  \cdot \prod_{l}y_l^{- p_l} + \sum_{j}y_j(p_j-y_j) \sigma_j^2 \cdot \prod_{l}y_l^{- p_l}
\\ & +  \sum_{i} p_i \bigg[ (1-2y_i)\sigma_i^2 + \sum_{j} y_j^2 \sigma_j^2 \bigg] \prod_{l}y_l^{- p_l} - \frac{1}{2} \sum_{i} p_i \sum_{k \neq i}p_k\bigg[ (1-2y_i)\sigma_i^2 + \sum_{j} y_j^2 \sigma_j^2 \bigg] \prod_{l}y_l^{- p_l} 
\\ & + \frac{1}{2} \sum_{i} \sum_{i \neq k}  p_ip_k \bigg[ \sum_{j} y_j^2 \sigma_j^2 - y_i\sigma_i^2 - y_k \sigma_k^2 \bigg] \prod_{l}y_l^{ - p_l}  + \int_{  \mathbb{R} } \bigg(  V\Big( D^3(\mathbf{y} ) + \mathbf{y} \Big) - V\big( \mathbf{y} \big)  \bigg) \nu(dx) 
\\ & = ( \mathbf{y} -\mathbf{p})^T A \mathbf{y} \cdot \prod_{l}y_l^{- p_l} + \int_{ \mathbb{R} } \sum_{j}(p_j-y_j) h_j(x) \nu(dx)  \cdot \prod_{l}y_l^{ - p_l} + \sum_{j} p_i(1-y_j) \sigma_j^2 \cdot \prod_{l}y_l^{ - p_l}
\\ &  -  \frac{1}{2} \sum_{i} \sum_{k \neq i} p_i p_k\Big[ (1-y_i)\sigma_i^2 + y_k\sigma_k^2 \Big] \prod_{l}y_l^{ - p_l}  + \int_{  \mathbb{R} } \bigg(  V\Big( D^3(\mathbf{y} ) + \mathbf{y} \Big) - V\big( \mathbf{y} \big)  \bigg) \nu(dx)
\\ & \leq \Bigg( \lambda_{2} | \mathbf{p} - \mathbf{y}|^2 + \sum_{j} p_i(1-y_j) \sigma_j^2 +  \int_{  \mathbb{R} } \bigg( \sum_{j}(p_j-y_j) h_j(x) - 1 \bigg) \nu(dx) 
\\ &  - \frac{1}{2} \sum_{i} \sum_{k \neq i} p_i p_k\Big[ (1-y_i) \sigma_i^2 + y_k\sigma_k^2  \Big]  \Bigg) \prod_{l}y_l^{ - p_l}  + \int_{  \mathbb{R} } \frac{ 1 + \max_{j} h_j(x) }{ 1+ \min_{j} h_j(x)  } \nu(dx)
\\ & : = C( \mathbf{y} ) \prod_{l}y_l^{- p_l}  + \varsigma,
\end{split}\end{equation*}
for 
$$ 
C( \mathbf{y} ):= \lambda_{2} | \mathbf{p} - \mathbf{y}|^2 + \sum_{j} p_i(1-y_j) \sigma_j^2 +  \int_{  \mathbb{R} } \bigg( \sum_{j}(p_j-y_j) h_j(x) - 1 \bigg) \nu(dx)  - \frac{1}{2} \sum_{i} \sum_{k \neq i} p_i p_k\Big[ (1-y_i) \sigma_i^2 + y_k\sigma_k^2  \Big] 
$$
and 
$$
\varsigma = \int_{  \mathbb{R} } \frac{ 1 + \max_{j} h_j(x) }{ 1+ \min_{j} h_j(x)  } \nu(dx).
$$

To finish the inequality, we note that

\begin{equation*}\begin{split}
C( \mathbf{y} ) \prod_{l}y_l^{- p_l}  + \varsigma & = \left( \frac{ C( \mathbf{y} ) \prod_{l}y_l^{- p_l}  }{ V( \mathbf{y} ) } + \frac{ \varsigma } { V( \mathbf{y} ) } \right)V( \mathbf{y} ) =  \left( \frac{ C( \mathbf{y} ) \prod_{l}y_l^{- p_l}  }{ K + \prod_{l}y_l^{- p_l} } + \frac{ \varsigma } { K + \prod_{l}y_l^{- p_l} } \right)V( \mathbf{y} ) 
 \leq \Big(  C( \mathbf{y} ) + \frac{ \varsigma }{ K} \Big) V \big( \mathbf{y} \big).
\end{split}\end{equation*}

By our assumptions, $C( \mathbf{y} ) < 0$ for $\mathbf{y} \in \Delta_n \backslash  U_{\delta}( \mathbf{p} )$. Thus, taking $K$ large enough so that  $C( \mathbf{y} ) + \frac{ \varsigma}{ K} <0$ for all $\mathbf{y} \in \Delta_n \backslash  U_{\delta}( \mathbf{p} )$ and $V \geq 1$, we are able to find a constants $c,b>0$ such that $\mathcal{A}_J V( \mathbf{y} ) \leq -c V( \mathbf{y} ) + b \mathbf{1}_{ U_{\delta}( \mathbf{p} ) }( \mathbf{y} )$  holds for all $\mathbf{y} \in \Delta_n$.

Defining $O^{C}:= \Delta_n \backslash O$ and $\pi( \cdot )$ as the invariant measure, we have 
\begin{equation*}\begin{split}
\pi \Big(  \overline{U}_{\delta}(\mathbf{p})^{C} \Big) & = \lim_{ t\to\infty}  \mathbb{E}_{ \mathbf{y} } \left[ \frac{1}{t} \int_{0}^{t} \mathbf{1}_{  \overline{U}_{\delta}(\mathbf{p})^{C} } \big(  \mathbf{s}(u) \big)    du \right]
 \leq \lim_{ t\to\infty}  \mathbb{E}_{ \mathbf{y} } \left[ \frac{1}{t} \int_{0}^{t} \frac{ | \mathbf{s}(u) - \mathbf{p}|^2  }{ \delta^2 }  du  \right] \leq \frac{ \kappa_J^2 }{ | \lambda_2 | \delta^2 },
\end{split}\end{equation*}
and therefore Equation \eqref{ess3} follows.
\end{proof}

\bigskip
\begin{acknowledgement*}
The author would like to thank Professors Bob Muncaster, Renming Song, Lee DeVille, and an anonymous referee, for numerous helpful discussions, tremendous guidance, and wonderful comments.
\end{acknowledgement*}

\bibliography{thesis}                                                                                                                                               

\begin{thebibliography}{10}

\bibitem{MA00}
Mario Abundo.
\newblock On first-passage times problem for one-dimensional jump-diffusion
  processes.
\newblock {\em Probability and Mathematical Statistics}, 20(2):399--423, 2000.

\bibitem{DA04}
D.~Applebaum.
\newblock {\em L{\'e}vy processes and stochastic calculus}.
\newblock Cambridge Studies in Advanced Mathematics, Cambridge, 2004.

\bibitem{BMC04}
N~Balaban, J.~Merrin, R.~Chait, L.~Kowalik, and S.~Leibler.
\newblock Bacterial persistence as a phenotypic switch.
\newblock {\em Science}, 305(5690):1622--1625, 2004.

\bibitem{BBCP89}
R.~Bartoszynski, W.~J. B\H{u}hler, W.~Chan, and D.~K. Pearl.
\newblock Population processes under the influence of disasters occurring
  independently of population size.
\newblock {\em J. Math. Bio.}, 2(2):167--178, 1989.

\bibitem{BHS07}
M.~Bena\..{i}m, J.~Hofbauer, and W.~Sandholm.
\newblock Robust permanence and impermanence for the stochastic replicator
  dynamic.
\newblock {\em Journal of Biological Dynamics}, 2(2):180--195, 2008.

\bibitem{B96}
J.~Bertoin.
\newblock {\em L{\'e}vy processes}.
\newblock Cambridge University Press, Cambridge, 1996.

\bibitem{BH57}
R.~J.~H. Beverton and S.~J. Holt.
\newblock {\em On the dynamics of exploited fish populations}.
\newblock Springer-Science+Business Media, B.V., London, 1957.

\bibitem{BF84}
S.~Bruan and W.~Fl\H{u}ckiger.
\newblock Increased population of the aphid aphis pomi at a motorway. part
  2-the effect of drought and deicing salt.
\newblock {\em Environmental Pollution Series A, Ecological and Biological},
  36(3):261--270, 1984.

\bibitem{C00}
A.~Cabrales.
\newblock Stochastic replicator dynamics.
\newblock {\em International Economic Review}, 41(2):451--481, 2000.

\bibitem{DMT95}
D.~Down, S.~P. Meyn, and R.~L. Tweedie.
\newblock Exponential and uniform ergodicity of markov processes.
\newblock {\em Annals of Applied Probability}, 23(4):1671--1691, 1995.

\bibitem{D65}
E.B. Dynkin.
\newblock {\em Markov Processes}.
\newblock Springer-Verlag, Berlin-G{\"o}ttingen-Heidelberg, 1965.

\bibitem{FH92}
D.~Fudenberg and C.~Harris.
\newblock Evolutionary dynamics with aggregate shocks.
\newblock {\em Journal of Economic Theory}, 57(2):420--441, 1992.

\bibitem{GS72}
I.~Gihman and A.~V. Skorohod.
\newblock {\em Stochastic differential equations}.
\newblock Springer-Verlag, New York, 1972.

\bibitem{HT97}
F.~Hanson and H.~Tuckwell.
\newblock Population growth with randomly distributed jumps.
\newblock {\em J. Math. Bio.}, 35:001--019, 1997.

\bibitem{H80}
R.~Z. Has'minski\u{i}.
\newblock {\em Stochastic Stability of Differential Equations}.
\newblock Sijthoff and Noordhoff, Rockville, Maryland, USA, 2004.

\bibitem{HS98}
J.~Hofbauer and K.~Sigmund.
\newblock {\em Evolutionary games and population dynamics}.
\newblock Cambridge University Press, Cambridge, 1998.

\bibitem{I05}
I.~Imhof.
\newblock The long-run behavior of the stochastic replicator dynamics.
\newblock {\em Annals of Applied Probability}, 15(1B):1019--1045, 2005.

\bibitem{KS91}
I.~Karatzas and S.~Shreve.
\newblock {\em Brownian Motion and Stochastic Calculus}.
\newblock Springer-Verlag, New York, 1991.

\bibitem{KP06}
R.~Khasminskii and N.~Potsepun.
\newblock On the replicator dynamics behavior under {S}tratonovich type random
  perturbations.
\newblock {\em Stochastic and Dynamics}, 6(2):197--211, 2006.

\bibitem{HK67}
H.~Kushner.
\newblock {\em Stochastic stability and control}.
\newblock Academic Press Inc., New York, 1967.

\bibitem{M07}
H~Masuda.
\newblock Ergodicity and exponential $\beta$-mixing bounds for multidimensional
  diffusions with jumps.
\newblock {\em Stochastic Processes and their Applications}, 25(117):35--56,
  2007.

\bibitem{MO93}
M.~Menotti-Raymond and S.~O'Brien.
\newblock Dating the genetic bottleneck of the african cheetah.
\newblock {\em Proc. Natl. Acad. Sci.}, 90:3172--3176, 1993.

\bibitem{MT93}
S.~P. Meyn and R.~L. Tweedie.
\newblock Stability of {M}arkovian {P}rocesses {I}{I}{I}: {F}oster-{L}yapunov
  criteria for continuous-time processes.
\newblock {\em Annals of Applied Probability}, 25(1):518--548, 1993.

\bibitem{RTW02}
N.~Rabalais, R.~Turner, and W.~Wiseman.
\newblock Gulf of mexico hypoxia, a.k.a, ``the dead zone''.
\newblock {\em Ann Rev Ecol Sys}, 33:235--263, 2002.

\bibitem{S99}
K.~Sato.
\newblock {\em L{\'e}vy Processes and Infinitely Divisible Distributions}.
\newblock Cambridge University Press, Cambridge, 1999.

\bibitem{AS89}
A.~V. Skorohod.
\newblock {\em Asymptotic methods in the theory of stochastic differential
  equations}.
\newblock American Mathematical Society, Moscow, 1989.

\bibitem{KT88}
K.~Taira.
\newblock {\em Diffusion Processes and Partial Differential Equations}.
\newblock Academic Press, Inc, San Diego, 1988.

\bibitem{KT04}
K.~Taira.
\newblock {\em Semigroups, Boundary Value Problems and Markov Processes}.
\newblock Springer-Verlag, Berlin-Heidelberg, 2004.

\bibitem{HT76}
Henry~C. Tuckwell.
\newblock On the first-exit time problem for temporally homogeneous markov
  processes.
\newblock {\em J. Appl. Prob.}, 13(1):39--48, 1976.

\bibitem{FY91}
P.~Young and D.~Foster.
\newblock Cooperation in the short and in the long run.
\newblock {\em Games Econ. Behav.}, 3(1):145--156, 1991.

\end{thebibliography}
\bibliographystyle{plain}                                                                                                                                                                              
\nocite{*}

\end{document}